\title{}
\author{}
\date{}
\definecolor{navy}{HTML}{40739E}
\definecolor{grey}{HTML}{A0A0A0}
\newtcolorbox{mybox}[2][]{colback=navy!5!white,
colframe=navy!75!black,fonttitle=\bfseries,
colbacktitle=navy!85!black,
title=#2,#1}
\tikzstyle{vertex}=[circle, draw, fill, inner sep=0pt, minimum size=5pt]
\newcommand{\vertex}{\node[vertex]}
\tikzstyle{smvertex}=[circle, draw, fill, inner sep=0pt, minimum size=2pt]
\tikzstyle{mvertex}=[circle, draw, fill, inner sep=0pt, minimum size=4pt]
\newcommand{\f}{\frac}
\newcommand{\n}{\f{1}{n}}
\newcommand{\rar}{\rightarrow}
\newcommand{\Z}{\mathbb{Z}}
\newcommand{\N}{\mathbb{N}}
\newcommand{\R}{\mathbb{R}}
\definecolor{purple}{RGB}{128,0,128}
\def\qed{\vbox{\hrule\hbox{\vrule\kern3pt\vbox{\kern6pt}\kern3pt\vrule}\hrule}}
\def\n{\noindent}
\newtheorem{thm}{Theorem}[section]
\newtheorem{cor}[thm]{Corollary}
\newtheorem{prop}[thm]{Proposition}
\newtheorem{defn}[thm]{Definition}
\newtheorem{cm}[thm]{Claim}
\newtheorem{conj}[thm]{Conjecture}
\newtheorem*{thm*}{Theorem}
\newtheorem*{cor*}{Corollary}
\newtheorem*{lm*}{Lemma}
\newtheorem*{cm*}{Claim}
\newtheorem*{prop*}{Proposition}
\theoremstyle{definition}
\newtheoremstyle%
 {Aside}%
 {}{}%
 {\color{purple}\itshape}
 {}%
 {\color{purple}\bfseries}%
 {\color{purple}.}%
 { }{}
\theoremstyle{Aside}
\title{Circulant TSP: Vertices of the Edge-Length Polytope and Superpolynomial Lower Bounds}
\author{Samuel C. Gutekunst}
\begin{document}

\maketitle

\begin{abstract}
    We study the \emph{edge-length} polytope, motivated both by algorithmic research on the \emph{Circulant Traveling Salesman Problem} (Circulant TSP) and number-theoretic research related to the \emph{Buratti-Horak-Rosa conjecture}.  Circulant TSP is a special case of TSP whose overall complexity is a significant still-open question, and where on an input with vertices $\{1, 2, ..., n\}$, the cost of an edge $\{i, j\}$ depends only on its \emph{length} $\min\{|i-j|, n-|i-j|\}$.  The edge-length polytope provides one path to solving circulant TSP instances, and we show that it is intimately connected to the factorization of $n$: the number of vertices scales with $n$ whenever $n$ is prime and with $n^{3/2}$ whenever $n$ is a prime-squared, but there are a superpolynomial  number of vertices whenever $n$ is a power of 2. In contrast, the more-standard Symmetric TSP Polytope has roughly $n!$ vertices.  Hence, for Circulant TSP, a brute-force algorithm checking every vertex is actually efficient in some cases, based on the factorization of $n$.  As an intermediate step, we give superpolynomial lower-bounds on two combinatorial sequences related to the Buratti-Horak-Rosa conjecture, which asks what combinations of edge lengths can comprise a Hamiltonian path. 
\end{abstract}

\section{Introduction}
This main result of this paper is a surprising connection between two strands of research.  First, \emph{circulant TSP} is a special case of the Traveling Salesman Problem (TSP) that has been studied since the 70's, motivated by waste minimization (Garfinkel \cite{Gar77}) and reconfigurable network design (Medova \cite{Med93}).  Circulant TSP has rich connections to number theory: while the overall complexity of circulant TSP has frequently been raised as a significant still-open question   (see, e.g.,  Burkard \cite{Burk97}, Burkard, De\u{\i}neko, Van Dal, Van der Veen, and Woeginger \cite{Burk98}, and Lawler, Lenstra, Rinnooy Kan, and Shmoys  \cite{Law07}),  work in the 70s showed that it could be easily and efficiently solved any time the input number of TSP vertices was a prime; recent work has shown that it is also efficiently solvable anytime the input number of TSP vertices is a prime-squared (Beal, Bouabida, Gutekunst, and Rustad \cite{Bea24}).  

A second strand of research involves number theoretic and combinatorial work related to the  \emph{Buratti-Horak-Rosa conjecture}, which also remains open despite substantial work.    The Buratti-Horak-Rosa conjecture relates to the combinations of edge lengths that can be patched together to form a Hamiltonian path, on a graph where $n$ vertices are arranged around a circle and equally spaced apart.  For instance, consider Figure \ref{fig:BHRconj}, with $n=4$ vertices equally spaced around a circle.  There are only two possible edge-lengths: short edges connecting consecutive vertices, and long edges connecting vertices along a diameter (and because the vertices are equally placed around the circle, e.g., the Euclidean distance between any pair of consecutive vertices is the same).   A Hamiltonian path can either use three short edges, two short edges and one long edge, or one short edge and three long edges.  However, it cannot use three long edges (as such a path, e.g., would never be able to connect the even-indexed vertices to the odd-indexed vertices).

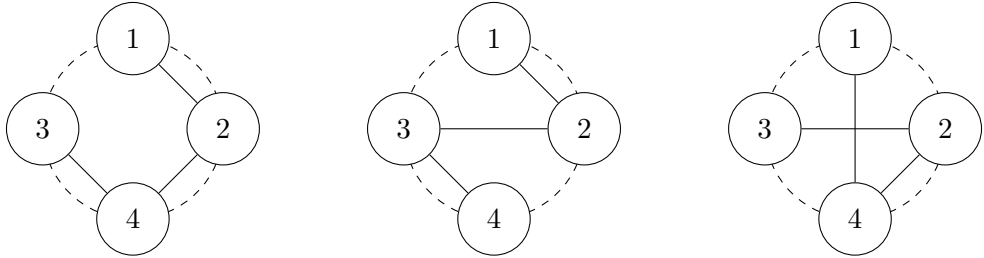
\begin{figure}[h!t]
\centering

\begin{tikzpicture}[scale=0.6]
\tikzset{vertex/.style = {shape=circle,draw,minimum size=2.5em}}
\tikzset{edge/.style = {->,> = latex'}}
\tikzstyle{decision} = [diamond, draw, text badly centered, inner sep=3pt]
\tikzstyle{sq} = [regular polygon,regular polygon sides=4, draw, text badly centered, inner sep=3pt]

\draw[dashed]   (-8, 0) circle[radius=2] node [yshift=0] {};
\node[vertex, fill=white] (a) at  (-10, 0) {$3$};
\node[vertex, fill=white] (b) at  (-6, 0) {$2$};
\node[vertex, fill=white] (c) at  (-8, 2) {$1$};
\node[vertex, fill=white] (d) at  (-8, -2) {$4$};
\draw (c) -- (b);
\draw (b) -- (d);
\draw (d) -- (a);

\draw[dashed]   (0, 0) circle[radius=2] node [yshift=0] {};
\node[vertex, fill=white] (a1) at  (-2, 0) {$3$};
\node[vertex, fill=white] (b1) at  (2, 0) {$2$};
\node[vertex, fill=white] (c1) at  (0, 2) {$1$};
\node[vertex, fill=white] (d1) at  (0, -2) {$4$};
\draw (c1) -- (b1);
\draw (b1) -- (a1);
\draw (a1) -- (d1);

\draw[dashed]   (8, 0) circle[radius=2] node [yshift=0] {};
\node[vertex, fill=white] (a2) at  (6, 0) {$3$};
\node[vertex, fill=white] (b2) at  (10, 0) {$2$};
\node[vertex, fill=white] (c2) at  (8, 2) {$1$};
\node[vertex, fill=white] (d2) at  (8, -2) {$4$};
\draw (c2) -- (d2);
\draw (d2) -- (b2);
\draw (b2) -- (a2);

\end{tikzpicture}

\caption{With four vertices spaced equally around a circle, there are three combinations of edge lengths that can comprise a Hamiltonian path.} \label{fig:BHRconj}
\end{figure} 

Sloane’s On-Line Encyclopedia of Integer Sequences (OEIS) \cite{Slo} sequence A352568 counts the number of combinations of edge lengths that can comprise a Hamiltonian path in this setting; following Figure \ref{fig:BHRconj}, there are 3 combinations when $n=4$. Recent work by McKay and Peters \cite{Mc22} determined the values of A352568 through $n=37$, and showed that the counts are consistent with the Buratti-Horak-Rosa conjecture (also noting that these computations took ``approximately four years of cpu time''!).  This sequence is closely related to  sequence A030077, which only accounts for the total length of the path (using the Euclidean distance).  A030077 lower bounds A352568; McKay and Peters \cite{Mc22} prove that these two sequences match whenever $n$ is prime, twice a prime, or a power of 2.

In this paper, we study a connection between both strands of research.  Our work is motivated by the \emph{edge-length polytope} $EL(n)$ and a related combinatorial sequence, the number of vertices of $EL(n)$.  The edge-length polytope was introduced in  Gutekunst and Williamson \cite{Gut20} as a tool for studying circulant TSP and the TSP more broadly. Our new sequence (the number of vertices of the edge-length polytope) is closely related to A352568.  But while A352568 is monotone-increasing for all known values, the number of vertices of $EL(n)$ is decidedly not.  

The main goal of this paper is to understand the vertices of $EL(n)$: understanding these vertices provides intuition for why current results for the complexity of circulant TSP (so far) seem to rely heavily on the factorization of $n$.  We characterize the edge-length polytope for small values of $n$, in the spirit of early work characterizing the (general) symmetric traveling salesman polytope for small $n$ (see, e.g., Boyd and Cunningham \cite{Boyd91}, Christof,  J{\"u}nger, and Reinelt \cite{ch91}, and Christof and Reinelt \cite{Chr96}). 
 Then we use previous work on circulant TSP to count the number of vertices of $EL(n)$ whenever $n$ is a prime or prime-squared.  We show that the number of vertices of $EL(n)$ scales linearly with $n$ whenever $n$ is prime,  scales with $n^{3/2}$ whenever $n$ is a prime-squared, but  is superpolynomial when $n$ is a power of 2 (specifically, that there are $\Omega\left(n^\frac{\log_2(n)-3}{2}\right)$ vertices when $n$ is a power of 2).  Thus it is possible to define polynomial-time brute-force algorithms for circulant TSP whenever $n$ is a prime or prime-squared, but not when $n$ is a power of 2.  

To get our superpolynomial result about about the vertices of $EL(n)$, we consider Hamiltonian paths as an intermediate step.  We introduce a class of Hamiltonian paths known as \emph{BLG Paths}, inspired by a result of  Bach, Luby, and Goldwasser (cited in Gilmore, Lawler, and Shmoys \cite{Gil85}).  BLG paths arise as the optimal solution to the problem of finding minimum-cost Hamiltonian paths on circulant instances (without a fixed start and end vertex).  We introduce \emph{encoding sequences} as a tool to count BLG paths.  We then show that there are super-polynomially many BLG paths whenever $n$ is a power of 2, each of which has a distinct edge-length vector.  This immediately implies that A352568 grows super-polynomially when $n$ is a power of $2$. Because McKay and Peters \cite{Mc22} showed that A352568 match A030077 when $n$ is a power of 2, our result also immediately implies that A030077  grows super-polynomially.  While our lower-bound is by no means tight, to the best of our knowledge it is the first superpolynomial lower bound on both sequences.

Finally, we note that the edge-length polytope has already shown itself to be a useful tool for polyhedral TSP research.  Specifically,  Gutekunst and Williamson \cite{Gut20} showed that this polytope gave rise to a new facet-defining inequality for the general symmetric Traveling Salesman Problem, and that the classic crown inequalities of Naddef and Rinaldi \cite{Nad92} were fundamentally about the edge-length polytope. 
Substantial TSP research has studied  facet-defining inequalities for the TSP (e.g. the clique-tree inequalities  \cite{Gro86b}, the comb inequalities \cite{Chv73, Gro79}, the crown inequalities \cite{Nad92}, the path inequalities \cite{Cor85}, the path-tree inequalities  \cite{Nad91} the binested inequalities  \cite{Nad92b}, and the rank inequalities \cite{Gro80}). In addition to giving insight into circulant TSP specifically, understanding the edge-length polytope is thus also be of interest for polyhedral approaches to the TSP. 

\subsection{Outline}

In Section \ref{sec:bg}, we begin with background on both the edge-length polytope and the  Buratti-Horak-Rosa conjecture.  Section \ref{sec:el} presents a few results on the edge-length polytope: we describe the edge-length polytope for small $n$ and dervie formulas for the number of vertices when $n$ is prime or prime-squared.  In Section \ref{sec:combo}, we introduce BLG paths.  Counting these paths gives a lower bound on A352568 and A030077, and we show that this lower bound is superpolynmial.  Finally, in Section \ref{sec:lb}, we show how these Hamiltonian paths lower bound the number of vertices of the edge-length polytope.

\section{Background and Notation}\label{sec:bg}

\subsection{Circulant TSP and the Buratti-Horak-Rosa Conjecture}

The Symmetric Traveling Salesman Problem (TSP) is one of the most famous problems in discrete mathematics and theoretical computer science.  An input consists of $n$ vertices $[n]:=\{1, 2, ..., n\}$ and edge costs $c_{ij}=c_{ji}$ (for $1\leq i, j\leq n$), indicating the costs of travelling between vertices $i$ and $j$.  The goal is to find a minimum-cost Hamiltonian cycle, a cycle which visits each vertex exactly once.   With just this set-up, the TSP is well known to be NP-hard  (see, e.g., Theorem 2.9 in Williamson and Shmoys \cite{DDBook}).   Much TSP research has thus focused on special cases, where the edge costs are more structured.  These include the \emph{metric} TSP (where edge costs obey the triangle inequality $c_{ij}+c_{jk} \geq c_{ik}$ for all $i, j, k\in [n]$), the \emph{graph} TSP (where edge costs correspond to distances between vertices in an underlying graph),  the \emph{Euclidean} TSP (where vertices have fixed locations in $\R^{n}$ and edge costs correspond to Euclidean distances between them, and the \emph{(1, 2)-TSP} (where  $c_{ij}\in\{1, 2\}$ for all $i, j$).  See, e.g., \cite{ Aro96, Ber08, Chr76,      kar21, Karp12, Mitch99, Mom16, Muc14, Gha11,  Pap93, Seb14,   Serd78} amoung many others.

In (symmetric) circulant TSP, the edge costs are not necessarily metric, but instead respect circulant symmetry.  Specifically,  define the \emph{length} of an edge $\{i, j\}$ as  $$\ell_{i, j}=\min\{|i-j|, n-|i-j|\}.$$   For instance, edges $\{1, 2\}$, $\{3, 2\}$, and $\{n, 1\}$ all have the same length 1; our short edges in Figure \ref{fig:BHRconj} all have length 1, and the long edges have length 2\footnote{As a technical point, in this paper we use the definition of edge length as $\ell_{i, j}=\min\{|i-j|, n-|i-j|\}.$  The sequence A030077 uses a Euclidean notion of length, i.e. the actual total Euclidean lengths of the paths in Figure \ref{fig:BHRconj} obtained by adding up the lengths of each segment.  Throughout this paper, we only use the former notion of length.}.  Figure \ref{fig:sym} illustrates circulant symmetry more generally.  In circulant TSP, all edges of length $i$ are required to have the same cost $c_i$, so $c_{ij}=c_{\min\{(i-j)\bmod n, (j-i)\bmod n\}}$.  (Here and throughout, when we work mod $n$, we mean with residues in $\{1, ..., n\}$.) Note that this is equivalent to requiring that the matrix of edge costs $C:=(c_{ij})_{i, j=1}^n$ be a symmetric, circulant matrix with $d=\lfloor n/2 \rfloor$ parameters $c_1, c_2, ..., c_d$.  We say that such an input is a \emph{circulant input}, or equivalently, has \emph{circulant costs}.

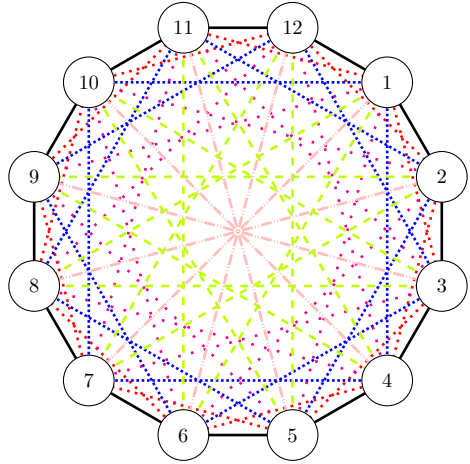
\begin{figure}[t!]
	\begin{center}
\begin{tikzpicture}[scale=0.7, transform shape]
\tikzset{vertex/.style = {shape=circle,draw,minimum size=2.5em}}
\tikzset{edge/.style = {->,> = latex'}}
\node[draw=none,minimum size=8cm,regular polygon,regular polygon sides=12] (a) {};

\foreach [evaluate ={\j=int(mod(\i, 12)+1)}] \i in {1,2, 3, ..., 12}
\draw[line width=1pt] (a.corner \i) -- (a.corner \j);

\foreach [evaluate ={\j=int(mod(\i+1, 12)+1)}] \i in {1,2, 3, ..., 12}
\draw[dotted, line width=1pt, red] (a.corner \i) -- (a.corner \j);

\foreach [evaluate ={\j=int(mod(\i+2, 12)+1)}] \i in {1,2, 3, ..., 12}
\draw[densely dotted, line width=1pt, blue] (a.corner \i) -- (a.corner \j);

\foreach [evaluate ={\j=int(mod(\i+3, 12)+1)}] \i in {1,2, 3, ..., 12}
\draw[loosely dotted, line width=1pt, magenta] (a.corner \i) -- (a.corner \j);

\foreach [evaluate ={\j=int(mod(\i+4, 12)+1)}] \i in {1,2, 3, ..., 12}
\draw[dashed, line width=1pt, lime] (a.corner \i) -- (a.corner \j);

\foreach [evaluate ={\j=int(mod(\i+5, 12)+1)}] \i in {1,2, 3, ..., 12}
\draw[dashdotted, line width=1pt, pink] (a.corner \i) -- (a.corner \j);

\foreach \n [count=\nu from 1, remember=\n as \lastn, evaluate={\nu+\lastn}] in {12, 11, ..., 1} 
\node[vertex, fill=white]at(a.corner \n){$\nu$};

\end{tikzpicture}

	\end{center}
	\caption{Circulant symmetry.  Edges of a fixed length are indistinguishable and have the same cost.  E.g. all edges of the form $\{v, v+1\}$ (where $v+1$ is taken mod $n$) have length 1, and thus the same cost $c_1$. }\label{fig:sym}\end{figure}

One of the reasons for the interest in circulant TSP is that circulant symmetry sometimes, but not always, provides enough structure to make a formally-hard problem easy.  For instance, while the complexity of circulant TSP is open, a classic result of Bach, Luby, and Goldwasser (cited in Gilmore, Lawler, and Shmoys \cite{Gil85}) shows that it is easy to find minimum-cost Hamiltonian paths (without a fixed starting and ending vertex) on circulant instances, and indeed that the greedy algorithm finds such paths.  We formally state this result in Section \ref{sec:HPaths}, but the main idea is as follows: Let $q$ divide $n$, and let $\equiv_n$ denote equivalence mod $n$.    Then the vertices $[n]$ can be partitioned into the sets $$S_1, S_2, ..., S_q=\{1 \leq i \leq n: i \equiv_q 1\}, \{1 \leq i \leq n: i \equiv_q 2\}, ..., \{1 \leq i \leq n: i \equiv_q q\}.$$ Any edge whose length is a multiple of $q$ will have both endpoints within the same $S_i$, so a Hamiltonian path connecting all vertices must use at least $q-1$ edges to connect the sets $S_1, ..., S_q$.  Thus, a Hamiltonian path can use at most $n-q$ edges whose length is a multiple of $q.$  Thus, if a Hamiltonian path uses $t_i$ edges of length $i$ for $i=1,...,d$, it must satisfy $$\sum_{1\leq i\leq d: q|i} t_i \leq n-q, \text{ for each } q \text{ dividing } n.$$  For instance, if $n$ is even, the inequality for $q=2$ says that any Hamiltonian path must satisfy $t_2+t_4+....+t_{n/2} \leq n-2$, i.e. that there is at least 1 edge whose length is odd to connect the sets $S_1=\{1, 3, ..., n-1\}$ and $S_2=\{2, 4, 6, ..., n\}$.

As such, the greediest strategy to find a Hamiltonian path on a circulant instance is to pick the cheapest edge-length (say, $j$), and use as many edges of that length as possible (appealing to the above logic, this turns out to be $n-\gcd(n, j)$ to ensure that the constraint is not violated when $q=\gcd(n, j)$).  Then pick the second-cheapest edge length and use as many of that as possible, and repeat until you have selected $n-1$ edges.  

Finding Hamiltonian paths on circulant instances is an extremal version of the  Buratti-Horak-Rosa problem of characterizing all combinations of edge lengths that can be patched together to form a Hamiltonian path: Buratti-Horak-Rosa ask for \emph{all} combinations, while circulant TSP asks only about combinations that arise when finding minimum-cost Hamiltonian paths with circulant edge costs.  The Buratti-Horak-Rosa conjecture is a striking statement that, in essence, conjectures that the extremal condition for finding minimum-cost Hamiltonian paths is all that matters: 

\begin{conj}[Buratti-Horak-Rosa]\label{conj}
	Let $t_1, t_2, ..., t_d$ be nonnegative integers such that $\sum_{i=1}^d t_i = n-1$ and, for each $q$ that divides $n$, $$\sum_{1\leq i\leq d: q|i} t_i \leq n-q.$$  Then there exists a Hamiltonian path on $[n]$ using $t_i$ edges of length $i$ (for $i=1, ..., d$).
\end{conj}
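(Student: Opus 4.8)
The statement is the full Buratti--Horak--Rosa conjecture, which remains open, so what follows is necessarily a research program rather than a guaranteed argument: I describe the line of attack I would pursue and isolate the obstruction that keeps it open. The \emph{necessity} of the divisibility constraints is the easy direction already sketched in the excerpt (connecting the $q$ residue classes $S_1, \ldots, S_q$ requires at least $q-1$ edges of length not divisible by $q$), so the entire content is the \emph{existence} (sufficiency) direction: given an admissible composition $(t_1, \ldots, t_d)$, build an explicit Hamiltonian path on $[n]$ realizing it. The plan is a two-level induction: first reduce a composite $n$ to the prime case through a quotient construction, and then attack the prime case directly by an algebraic (polynomial-method) argument. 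The outer induction is on the number of prime factors of $n$ counted with multiplicity, with $n$ prime as the base case.

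\textbf{The reduction for composite $n$.}
Fix a prime $p \mid n$ and partition $[n]$ into the residue classes $S_1, \ldots, S_p$ mod $p$, each of size $n/p$. Within a class $S_r$, consecutive vertices differ by $p$ in $\Z_n$, so an edge of length $i$ with $p \mid i$ restricts to a length-$(i/p)$ edge on the $n/p$-vertex circular structure carried by $S_r$; edges of length $i$ with $p \nmid i$ cross between classes and project, under $\Z_n \to \Z_p$, to nonzero steps in $\Z_p$. The constraint for $q = p$ guarantees at least $p-1$ cross-class edges, since the number of cross edges is $(n-1) - \sum_{p \mid i} t_i \geq (n-1) - (n-p) = p-1$. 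The reduction requires two lemmas. A \emph{distribution lemma}: split the at most $n-p$ multiple-of-$p$ edges among the $p$ classes so that each class receives a sub-composition which, after rescaling lengths by $1/p$, satisfies the Buratti--Horak--Rosa constraints for the $n/p$-instance; the induction hypothesis then realizes a collection of sub-paths inside each class. A \emph{stitching lemma}: use the $\geq p-1$ cross-class edges to join these sub-paths into a single Hamiltonian path with the exact prescribed edge counts.

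\textbf{The prime base case.}
For $n = p$ prime, every nonzero step generates $\Z_p$, and the problem becomes purely about partial sums. Realizing $(t_1, \ldots, t_d)$ amounts to choosing, for each prescribed edge of length $i$, a sign so that it becomes a step $+i$ or $-i$ in $\Z_p$, together with an ordering $s_1, \ldots, s_{p-1}$ of these steps, such that the partial sums $0,\ s_1,\ s_1 + s_2,\ \ldots,\ \sum_{k} s_k$ are all distinct mod $p$ (these are the visited vertices). I would attack this by the polynomial method: encode ``all partial sums distinct'' as the nonvanishing of
\[
P(x_1, \ldots, x_{p-1}) = \prod_{0 \le a < b \le p-1} \Big( \sum_{a < k \le b} x_k \Big),
\]
where each $x_k$ ranges over the signed values allowed by its assigned length, and apply the Combinatorial Nullstellensatz: if a suitable top-degree monomial (whose exponent pattern is dictated by the multiplicities $t_i$) has nonzero coefficient, an admissible sign/order choice exists. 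This mirrors the successful Nullstellensatz-style resolutions of the two- and three-distinct-length cases.

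\textbf{The main obstacle.}
The difficulty lives at both levels, and it is where all known attempts stall. In the reduction, the \emph{coupling} between the distribution and stitching lemmas is the crux: each of the $p$ sub-instances must independently satisfy its Buratti--Horak--Rosa constraints \emph{while} the cross-class edges simultaneously retain enough freedom to connect the pieces and match endpoints into one path; an adversarial split of the multiple-of-$p$ budget may force some sub-instance to violate its constraints or leave the cross edges unable to connect, and the hypotheses do not obviously preclude this. In the prime base case, the Nullstellensatz coefficient is a signed count of valid sign/order assignments with no closed form, and proving it nonvanishing for \emph{all} admissible $(t_i)$ at once--across an unbounded number of distinct lengths--is precisely the open core. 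The plan therefore cleanly reduces the general conjecture to the prime case, but honestly does not dispose of it: these two coupled nonvanishing/feasibility questions are exactly what has resisted proof.
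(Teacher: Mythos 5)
There is no proof to compare against: the statement you were given is the Buratti--Horak--Rosa conjecture itself, which the paper presents as open (with computational verification only through $n=37$ by McKay and Peters) and never attempts to prove. Your proposal is, by your own candid framing, a research program rather than a proof, so it cannot be certified as correct; the honest self-assessment is the right instinct, but two of the gaps are deeper than you indicate. First, the Combinatorial Nullstellensatz setup for the prime base case cannot work as stated. If each variable $x_k$ ranges only over the two signed values $\{+\ell_k, -\ell_k\}$, the Nullstellensatz requires a nonvanishing monomial of top degree with exponent at most $1$ in each variable; but your polynomial $P$ has total degree $\binom{p}{2}$ in only $p-1$ variables, so every top-degree monomial has some exponent of order $p/2$, and the hypothesis of the theorem is unsatisfiable regardless of what the coefficients turn out to be. Moreover, your variables already occupy fixed positions in the sequence, so the ``ordering'' freedom is not encoded in the polynomial at all --- the assignment of lengths to positions must be fixed before the value sets are even defined, which collapses exactly the freedom you were hoping the algebraic method would exploit. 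One would need a genuinely different encoding (e.g., value sets of size growing with the multiplicities), and no such encoding is known; this is not merely ``a coefficient with no closed form.''

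Second, the claim that the composite case ``cleanly reduces'' to the prime case is structurally wrong, not just unproven. If the admissible composition uses more than $p-1$ edges of length not divisible by $p$ --- which is typical, since the constraint only bounds the multiple-of-$p$ edges by $n-p$ --- then the Hamiltonian path must enter and leave some residue class multiple times, so the within-class edges form a disjoint union of subpaths (a spanning linear forest of the class), not a single Hamiltonian path on $n/p$ vertices. The induction hypothesis you invoke (BHR on $n/p$ vertices) therefore supplies the wrong object: you would need a substantially stronger statement about realizing a rescaled composition as a linear forest with a prescribed number of components and compatible endpoints, and that strengthening is itself open. Finally, note that even if the reduction were repaired, the base case $n=p$ prime is precisely Buratti's original conjecture, which remains open; so the architecture reduces one open problem to another. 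For a genuinely provable target in this neighborhood, the paper's actual results are instructive: it sidesteps BHR entirely and proves only an extremal/counting statement --- that the BLG paths arising from greedy circulant solutions (Proposition \ref{prop:HP}) already give superpolynomially many realizable edge-length vectors when $n$ is a power of $2$ --- which is a lower bound consistent with, but far weaker than, the full conjecture.
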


Both circulant TSP (finding minimum-cost Hamiltonian cycles with circulant edge costs) and the Buratti-Horak-Rosa conjecture (understanding what sets of edge lengths can comprise a Hamiltonian path) have received substantial attention in their own right.  Much of the work on circulant TSP has tackled its complexity in simplified settings:  it took about 15 years to resolve the complexity of even the simplest non-trivial case of circulant TSP (the \emph{two-stripe symmetric circulant TSP}), with work beginning in 2007 with  Greco and Gerace \cite{Grec07} and Gerace and Greco \cite{Ger08b} and with  Gutekunst, Jin, and Williamson \cite{gut22} finally showing that the two-stripe symmetric circulant TSP problem was solvable in polynomial time in 2022.  Another strand of research has instead tried to tackle circulant TSP through the factorization of $n$; in the 70's, Garfinkel \cite{Gar77} showed that circulant TSP could be efficiently solved whenever $n$ was prime, with Beal, Bouabida, Gutekunst, and Rustad \cite{Bea24} recently showing that the same held true whenever $n$ is a prime-squared.  In another direction, de Klerk and Dobre \cite{Klerk11} empirically and theoretically compared lower bounds for circulant TSP. 
See also  \cite{Burk97, Burk98, Gut19b, Gut20, Law07}, amoung others.

Work on the Buratti-Horak-Rosa conjecture dates back to at least the 2000s, when Buratti conjectured a version of Conjecture \ref{conj} whenever $n$ was prime: in that case there are no non-trivial divisors of $n$, and Buratti conjectured that there was a Hamiltonian path with any combination of $n-1$ edges of any length (see Horak and Rosa \cite{Hor09} for a discussion of the origin of the Buratti-Horak-Rosa conjecture).  Since then, substantial number theoretic work has gone into understanding what collections of edge lengths can constitute a Hamiltonian cycle and/or path.  Ollis, Pasotti,  Pellegrini, and Schmitt \cite{Oll22} summarize many of the recent results 
(see also \cite{Avi23, Bur13, Cap10, Cha22, cost18, Din09, Hor09, Oll21, Oll22, Pas14, Pas14b, Pas19, Vaz22}, among others).

A parallel direction of work related to the  Buratti-Horak-Rosa conjecture has been combinatorial, focusing on the number of combinations of edge lengths that can comprise a Hamiltonian path (OEIS sequence A352568), and providing computational support for the Buratti-Horak-Rosa conjecture.  Mariusz Meszka is reported to have verified it through $n=23$ (see McKay and Peters \cite{Mc22}), and McKay and Peters \cite{Mc22} extended that through $n=37$.  McKay and Peters also give values of A352568 for $n=38$ to 50, assuming that the Buratti-Horak-Rosa conjecture is true.

\subsection{Minimum-Cost Hamiltonian Paths in Circulant Graphs}\label{sec:HPaths}
In this section, we build up to a classic result of Bach, Luby, and Goldwasser, which shows how to find minimum-cost Hamiltonian paths with circulant edge costs.  In Section \ref{sec:combo}, we will count these paths and show that they grow super-polynomially.  Throughout this section and the rest of the paper, we consider a graph with $n$ vertices, circulant edge costs, and that all calculations with vertices are implicitly mod $n$ (e.g. if we are at vertex $v$ and take a length-$\ell$ edge to vertex $v+\ell$, we are using $v+\ell$ as shorthand for $(v+\ell) \mod n$).  We also let $d=\lfloor n/2\rfloor$ denote the maximum length of an edge on a circulant instance with $n$ vertices.

First, we define a circulant graph.  

\begin{defn}
Let $S\subseteq \{1, ..., d\}.$ The {\bf circulant graph $C\langle S\rangle$} is the (simple, undirected, unweighted) graph including exactly the edges whose lengths are in $S$.  I.e., the graph with adjacency matrix $$A=(a_{ij})_{i, j=1}^n, \hspace{5mm} a_{ij} = \begin{cases} 1, & (i-j) \bmod n \in S \text{ or } (j-i) \bmod n \in S \\ 0, & \text{ else.} \end{cases}$$
\end{defn}
\noindent For a set of edge-lengths $S$, the graph $C\langle S\rangle$ includes exactly the edges of those lengths in $S$.  See, for example, Figure \ref{fig:CKNotation} for a circulant graph.

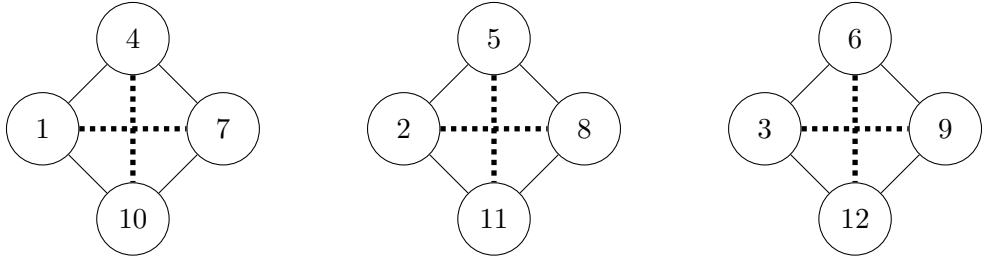
\begin{figure}[t]
\centering

\begin{tikzpicture}[scale=0.6]
\tikzset{vertex/.style = {shape=circle,draw,minimum size=2.5em}}
\tikzset{edge/.style = {->,> = latex'}}
\tikzstyle{decision} = [diamond, draw, text badly centered, inner sep=3pt]
\tikzstyle{sq} = [regular polygon,regular polygon sides=4, draw, text badly centered, inner sep=3pt]
\node[vertex] (a) at  (-10, 0) {$1$};
\node[vertex] (b) at  (-6, 0) {$7$};
\node[vertex] (c) at  (-8, 2) {$4$};
\node[vertex] (d) at  (-8, -2) {$10$};

\node[vertex] (a1) at  (-2, 0) {$2$};
\node[vertex] (b1) at  (2, 0) {$8$};
\node[vertex] (c1) at  (0, 2) {$5$};
\node[vertex] (d1) at  (0, -2) {$11$};

\node[vertex] (a2) at  (6, 0) {$3$};
\node[vertex] (b2) at  (10, 0) {$9$};
\node[vertex] (c2) at  (8, 2) {$6$};
\node[vertex] (d2) at  (8, -2) {$12$};
\draw[dotted,line width=2pt] (a) -- (b);
\draw[dotted,line width=2pt] (c) -- (d);
\draw (c) -- (a);
\draw (d) -- (a);
\draw (d) -- (b);
\draw (c) -- (b);

\draw[dotted,line width=2pt] (a1) -- (b1);
\draw[dotted,line width=2pt] (c1) -- (d1);
\draw (c1) -- (a1);
\draw (d1) -- (a1);
\draw (d1) -- (b1);
\draw (c1) -- (b1);

\draw[dotted,line width=2pt] (a2) -- (b2);
\draw[dotted,line width=2pt] (c2) -- (d2);
\draw (c2) -- (a2);
\draw (d2) -- (a2);
\draw (d2) -- (b2);
\draw (c2) -- (b2);

\end{tikzpicture}

\caption[Circulant graphs and component notation.]{The graph $C\langle\{6, 3\}\rangle$ for $n=12.$  Dashed edges are of length 6.  If $\phi(1)=6$ and $\phi(2)=3$, then $g_0^{\phi}=12, g_1^{\phi}=6$, and $g_2^{\phi}=3.$ } \label{fig:CKNotation}
\end{figure} 

Bach, Luby, and Goldwasser's algorithm can be interpreted as a greedy algorithm that uses as many edges as possible of the cheapest edge-length, and then of the next cheapest, and so on.  Hence, it only depends on the order of the edge-costs $c_1, ...., c_d$, and we define $\phi: [d]\rar [d]$ to be a {\bf cost permutation} that orders the edge costs from least to greatest.  The permutation $\phi$ gives rise to an associated $g$-sequence, which will quantify how greedy we can be.  

\begin{defn}
Let $\phi: [d]\rar [d]$ be a {\bf cost permutation} so that $c_{\phi(1)}\leq c_{\phi(2)} \leq \cdots \leq c_{\phi(d)}.$  The {\bf $g$-sequence associated to $\phi$} is $g^{\phi}=(g_0^{\phi}, g_1^{\phi}, ..., g_d^{\phi}),$ recursively defined by $$g_i^{\phi}=\begin{cases} n, & i=0 \\ \gcd\left(\phi(i), g_{i-1}^{\phi}\right), & \text{ else.} \end{cases}$$
\end{defn}
Equivalently, for $1\leq i\leq d$, $g_i^{\phi}=\gcd(n, \phi(1), \phi(2), ..., \phi(i)).$

The graph $C\langle\{\phi(1), \phi(2), ..., \phi(k)\}\rangle$ (i.e. the circulant graph on vertex set $[n]$ and with exactly the edges of length $\phi(1), \phi(2), ...$ $\phi(k)$) has $g_k^{\phi}$ components with $n/g_k^{\phi}$ vertices each (see, e.g., Burkard and Sandholzer \cite{Burk91}).   Similarly,  let $$\ell :=  \min \{i: 1\leq i\leq d, g_i^{\phi}=1\}.$$  The graph $C\langle\{\phi(1), \phi(2), ..., \phi(k)\}\rangle$ is Hamiltonian if and only if $k\geq \ell$ (again, see Burkard and Sandholzer \cite{Burk91}).

This notation is sufficient to state Bach, Luby, and Goldwasser's result.  The graph $C\langle\{\phi(1)\}\rangle$ has $g_1^{\phi}$ components;  a Hamiltonian path must use at least $g_1^{\phi}-1$ edges that cross these components, and since length-$\phi(1)$ stay within a component, a Hamiltonian path can use at most $(n-1)-(g_1^{\phi}-1)=n-g_1^{\phi}$ of the cheapest length-$\phi(1)$ edges.  Similarly, a Hamiltonian path must can use at most $n-g_2^{\phi}$ total edges edges of length $\phi(1)$ and $\phi(2)$, since at least $g_2^{\phi}-1$ other edges are required to connect the components of  $C\langle\{\phi(1), \phi(2)\}\rangle$.  Thus, in the best case, it would use $n-g_1^{\phi}$ edges of length $\phi(1)$ and $g_1^{\phi}-g_2^{\phi}$ edges of length $\phi(2)$.  Proceeding by a similar argument shows that, in the very best case, a Hamiltonian path would cost $\sum_{i=1}^{\ell} (g_{i-1}^{\phi}-g_i^{\phi})c_{\phi(i)}.$  That such a path exists is exactly Bach, Luby, and Goldwasser's result!

\begin{restatable}[Bach, Luby, and Goldwasser, cited in Gilmore, Lawler, and Shmoys \cite{Gil85}]{prop}{HP} \label{prop:HP}
Consider a circulant instance on the complete graph $K_n$ where all edges of length $i$ have cost $c_i$, for $i=1, ...., d$.  Let $\phi$ be an associated stripe permutation.  A minimum-cost Hamiltonian path has cost $$\sum_{i=1}^{\ell} (g_{i-1}^{\phi}-g_i^{\phi})c_{\phi(i)}.$$  Such a path uses $(g_{i-1}^{\phi}-g_i^{\phi})$ edges of length $\phi(i)$, for $i=1, ..., d$. 
\end{restatable}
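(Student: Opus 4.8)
The plan is to prove matching bounds: first that every Hamiltonian path costs at least $\sum_{i=1}^{\ell}(g_{i-1}^{\phi}-g_i^{\phi})c_{\phi(i)}$, and then to exhibit a Hamiltonian path attaining this value with exactly $g_{i-1}^{\phi}-g_i^{\phi}$ edges of each length $\phi(i)$.

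For the lower bound I would formalize the connectivity argument already sketched before the statement. Suppose a Hamiltonian path uses $t_{\phi(i)}$ edges of length $\phi(i)$, and set $T_k:=\sum_{i=1}^{k}t_{\phi(i)}$. Deleting from the path every edge of length not among $\phi(1),\dots,\phi(k)$ leaves a forest contained in $C\langle\{\phi(1),\dots,\phi(k)\}\rangle$; since that circulant graph already has $g_k^{\phi}$ components, the forest has at least $g_k^{\phi}$ components, and a forest on $n$ vertices with at least $g_k^{\phi}$ components has at most $n-g_k^{\phi}$ edges, forcing $T_k\le n-g_k^{\phi}$ for every $k$ (with $T_d=n-1$). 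To minimize $\sum_i c_{\phi(i)}t_{\phi(i)}$ over these constraints I would use summation by parts, which rewrites the cost as $c_{\phi(d)}(n-1)+\sum_{k=1}^{d-1}(c_{\phi(k)}-c_{\phi(k+1)})T_k$. Because $\phi$ orders the costs increasingly, each coefficient $c_{\phi(k)}-c_{\phi(k+1)}$ is nonpositive, so the cost is minimized by taking each $T_k$ as large as possible, namely $T_k=n-g_k^{\phi}$; this choice is feasible since $g_k^{\phi}\mid g_{k-1}^{\phi}$ makes $n-g_k^{\phi}$ nondecreasing. It yields $t_{\phi(i)}=g_{i-1}^{\phi}-g_i^{\phi}$ and the claimed cost as a lower bound on every Hamiltonian path.

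For the upper bound I would give an explicit recursive construction on $\Z/n\Z$. Write $H_i=\langle g_i^{\phi}\rangle$ for the subgroup of multiples of $g_i^{\phi}$, so $H_0=\{0\}\subseteq H_1\subseteq\cdots\subseteq H_{\ell}=\Z/n\Z$ (using $g_{\ell}^{\phi}=1$), where $H_i$ has $n/g_i^{\phi}$ vertices. I would maintain by induction a Hamiltonian path $P_i$ on $H_i$ having $0$ as one endpoint, starting from the trivial $P_0$ on $\{0\}$. Given $P_{i-1}$, the key algebraic fact is $H_i=\langle H_{i-1},\phi(i)\rangle$ (because $g_i^{\phi}=\gcd(g_{i-1}^{\phi},\phi(i))$), so $\phi(i)$ generates the cyclic quotient $H_i/H_{i-1}$ of order $m_i:=g_{i-1}^{\phi}/g_i^{\phi}$, and $H_i$ is partitioned into the cosets $j\phi(i)+H_{i-1}$ for $j=0,\dots,m_i-1$. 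I would place the translate $j\phi(i)+P_{i-1}$ on the $j$-th coset and stitch consecutive copies in a zigzag using single edges of the form $v\mapsto v+\phi(i)$: each such edge automatically has length $\phi(i)$ and joins an endpoint of the $j$-th translate to the matching endpoint of the $(j+1)$-st, since the two translates differ by exactly $\phi(i)$. This adds $m_i-1$ edges of length $\phi(i)$, produces a single simple path $P_i$, and keeps $0$ as an endpoint; taking $i=\ell$ yields a Hamiltonian path on all of $\Z/n\Z$.

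The remaining point is the edge count, which is where the bookkeeping matters: the length-$\phi(k)$ edges introduced at stage $k$ get replicated when the later stages tile with $m_{k+1},\dots,m_{\ell}$ translates. Hence the number of length-$\phi(k)$ edges in $P_{\ell}$ is $(m_{k+1}\cdots m_{\ell})(m_k-1)$, and telescoping $m_{k+1}\cdots m_{\ell}=g_k^{\phi}/g_{\ell}^{\phi}=g_k^{\phi}$ together with $m_k-1=(g_{k-1}^{\phi}-g_k^{\phi})/g_k^{\phi}$ gives exactly $g_{k-1}^{\phi}-g_k^{\phi}$, matching the lower bound term by term. I expect the main obstacle to be the construction rather than the lower bound: one must verify that the coset decomposition plus zigzag stitching genuinely produces a single simple path with no premature cycles and with an endpoint always available for the next join, and that the replication of earlier edges is counted correctly. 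The lower bound is then a short exchange argument, and the two halves meet to pin down the minimum cost exactly.
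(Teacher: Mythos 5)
Your proposal is correct, and its two halves relate to the paper's proof sketch in different ways. The constructive half is essentially identical to the paper's: the paper also builds the path recursively, starting from a path on the residue class of vertex $1$ modulo $g_1^{\phi}$, then repeatedly ``copying and translating'' by $\phi(j)$ and stitching the $g_{j-1}^{\phi}/g_j^{\phi}$ translates together in a zigzag with single length-$\phi(j)$ edges; your coset formulation $H_i=\langle g_i^{\phi}\rangle$ with $H_i/H_{i-1}$ cyclic of order $m_i$ generated by the image of $\phi(i)$ is just a cleaner group-theoretic phrasing of the same construction (and it correctly handles the base case uniformly, preserves an endpoint at $0$ for the next join, and your replication count $(m_{k+1}\cdots m_{\ell})(m_k-1)=g_k^{\phi}\cdot(g_{k-1}^{\phi}-g_k^{\phi})/g_k^{\phi}$ telescopes as claimed). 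Where you genuinely diverge is the lower bound. The paper closes it in one stroke by observing that any Hamiltonian path is a spanning tree, and that Kruskal's algorithm produces a minimum spanning tree using exactly $g_{i-1}^{\phi}-g_i^{\phi}$ edges of length $\phi(i)$, so the MST cost $\sum_{i=1}^{\ell}(g_{i-1}^{\phi}-g_i^{\phi})c_{\phi(i)}$ bounds every Hamiltonian path from below. You instead derive the prefix constraints $T_k\le n-g_k^{\phi}$ directly (via the same component-counting fact about $C\langle\{\phi(1),\dots,\phi(k)\}\rangle$ that the paper uses) and then finish with summation by parts, using that the coefficients $c_{\phi(k)}-c_{\phi(k+1)}$ are nonpositive to push each $T_k$ to its upper bound. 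Both arguments rest on the identical structural fact; the paper's route is shorter but imports the optimality of Kruskal's algorithm as a black box, while yours is self-contained and in effect reproves greedy/matroid optimality for this special case from scratch. Your version also makes explicit the feasibility point (that $n-g_k^{\phi}$ is nondecreasing) that the MST argument gets for free.
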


We sketch the proof of Proposition \ref{prop:HP} in Section \ref{sec:lb}, when we extend Hamiltonian paths using $(g_{i-1}^{\phi}-g_i^{\phi})$ edges of length $\phi(i)$ for $i=1, ..., \ell$ to Hamiltonian cycles.  Note that  $(g_{i-1}^{\phi}-g_i^{\phi})=1-1=0$ for $i>\ell$, so it is equivalent to say that such a path uses $(g_{i-1}^{\phi}-g_i^{\phi})$ edges of length $\phi(i)$ for $i=1, ..., \ell$, and 0 otherwise.

\section{The Edge-Length Polytope $EL(n)$}\label{sec:el}

\subsection{Traveling Salesman Problem Polytopes: $STSP(n)$ and $EL(n)$}

Substantial research has gone into studying the \emph{symmetric Traveling Salesman Problem polytope}, which we denote by $STSP(n)$. Let $E_n$ denote the edge set of the complete graph $K_n$ on $n$ vertices. For a Hamiltonian cycle $H$ on $[n]$, let $\chi_H\in \R^{|E_n|}$ denote the edge-incidence vector of $H$ (i.e. $\chi_H$ is indexed by the edges of $K_n$ and, for an edge $e\in E_n$,   $(\chi_H)_e=1$ if $e\in H$ and $(\chi_H)_e=0$ otherwise).  $STSP(n)$ is the convex hull of the edge-incidence vectors of all Hamiltonian cycles on $K_n$.  That is, $$STSP(n)=\text{conv}\{\chi_H : H \text{ is a Hamiltonian cycle on } K_n\}\subset \R^{|E_n|}.$$  

Classic polyhedral theory gives several equivalent characterizations of vertices of polytopes.  For a polytope $P\subset \R^m$, a point $x\in P$ is a \emph{vertex} if it is the unique optimal solution to some linear program where $P$ is the set of feasible points (i.e., there exists some $c\in \R^m$ such that $c^Tx < c^Ty$ for all other $y\in P, y\neq x$).  The point $x$ is an \emph{extreme point} if it cannot be written as a nontrivial convex combination of two other points in $P$ (i.e., there are no points $y, z\in P$ with $y\neq x$ and $z\neq x$, and a scalar $\lambda\in [0, 1]$ such that $x=\lambda y+(1-\lambda)z$).  These are equivalent: $x$ is an extreme point if and only if it is a vertex of $P$.  When optimizing a linear program over a polytope $P$, at least one optimal solution will be a vertex.  A polytope $P$ can be represented as the convex hull of its extreme points, but can also be represented as $\{x\in \R^m: Ax\leq b\}$ for some matrix $A$ and vector $b$.  See, e.g., Chapter 2 of Bertsimas and Tsitsiklis \cite{Ber97}, for more polyhedral background.  

The vertices of $STSP(n)$ can be readily characterized.  Each Hamiltonian cycle $H$ is the unique optimal solution to a linear program over $STSP(n)$, e.g. where $c_e=0$ for $e\in H$ and $c_e=1$ otherwise.  The vertices of $STSP(n)$ correspond to the $\frac{(n-1)!}{2}$ Hamiltonian cycles\footnote{There are $n!$ permutations of $[n]$, and each Hamiltonian cycle corresponds to $2n$ permutations: pick any of the $n$ vertices in the Hamiltonian cycle as the start of the permutation, and then obtain two permutations by proceeding clockwise or counterclockwise through the cycle.} on $n$.    

In providing complete descriptions of $STSP(n)$, early work focused on counting and describing \emph{facets}.  Part I, Chapter 4 of Wolsey and Nemhauser \cite{Wol99} provides classic background on facets, and Chapter 5 of  Applegate, Bixby, Chvatal, and Cook \cite{App06b} surveys the prominent role of facets within TSP research (see also Gr{\"o}tschel and Padberg \cite{Gro86} and Naddef \cite{Nad06}). In the early 90s, Boyd and Cunningham \cite{Boyd91} and Christof,  J{\"u}nger, and Reinelt \cite{ch91} respectively found the number of facets of $STSP(7)$ and $STSP(8)$, and a few years later, Christof and Reinelt \cite{Chr96} found the number of facets of $STSP(9)$.  See Table \ref{tab:PolyCounts}.

\begin{table}[t!]
 \begin{center}
 \caption{Vertex and Facet Counts of $STSP(n)$ and $EL(n)$} \label{tab:PolyCounts}
\begin{tabular}{|c|c|c|c|c|} 
 \hline
& \multicolumn{2}{|c|}{$STSP(n)$}  & \multicolumn{2}{|c|}{$EL(n)$}  \\ \hline
 $n$ & $\#$ Vertices& $\#$  Facets & $\#$  Vertices  & $\#$  Facets\\ \hline
6 & 60  & 100 & 5 &5  \\ \hline
7 &  360 & 3,437 & 3 & 3  \\ \hline
8 &  2520 & 194,187 &  10&  8 \\ \hline
9 & 20,160 & 42,104,442 & 6 & 5  \\ \hline
10 & 181,440 & ? & 18 & 10  \\ \hline
11 &  1,814,400 & ? & 5 & 5  \\ \hline
12 & 19,958,400 & ? & 48 & 16  \\ \hline
13 &239,500,800  & ? & 6 & 6  \\ \hline
14 &3,113,510,400  & ? & 51 & 19  \\ \hline
\end{tabular}
\end{center}  
\end{table}

Solving an instance of the (general) symmetric TSP is equivalent to optimizing a linear function over  $STSP(n)$, but circulant TSP adds substantial symmetry.  Since all edges of a given length $i$ have the same cost $c_{i}$, all that matters is how many edges of each length are used.  Specifically, for $x\in STSP(n)$, let $$t_i = \sum_{\{s, t\}\in E: \ell_{s, t} = i} x_{s, t}$$ denote the total weight of edges of length $i$.  As above, let $d=\lfloor \f{n}{2}\rfloor$ be the longest possible length of an edge in $K_n$.  We describe $(t_1, ..., t_d)$ as the \emph{edge-length vector}.   Gutekunst and Williamson \cite{Gut20} introduced  the   \emph{edge-length polytope} $EL(n)$, which is the projection of $STSP(n)$ to the variables $t_1, ..., t_d.$  That is, $$EL(n):= \text{conv}\left\{(t_1, ..., t_d): x\in STSP(n), t_i = \sum_{\{s, t\}\in E: \ell_{s, t} = i} x_{s, t}\right\}.$$ 

Table \ref{tab:PolyCounts} also shows the vertex and facet counts of $EL(n)$ for small $n$.  These were obtained using Polymake \cite{Gaw00} with a brute-force algorithm: iterate over all Hamiltonian cycles on $[n]$, create a list $L$ of all the unique edge-length vectors, enter a polytope into Polymake as the convex hull of $L$, and then use Polymake to compute the vertices and facets of that polytope.  Figures \ref{fig:EL67} and \ref{fig:EL89} show $EL(n)$ for $n=6$ through 9.  Note that $t_1+t_2+...+t_d=n$, so these figures show the project onto $(t_1, ..., t_{d-1})$.  Finally, Appendix \ref{sec:app} lists all vertices of $EL(7)$ through $EL(14).$

\begin{figure}[t!]
\centering
    \begin{subfigure}[t]{0.48\textwidth}
        \centering
\includegraphics[width=\textwidth]{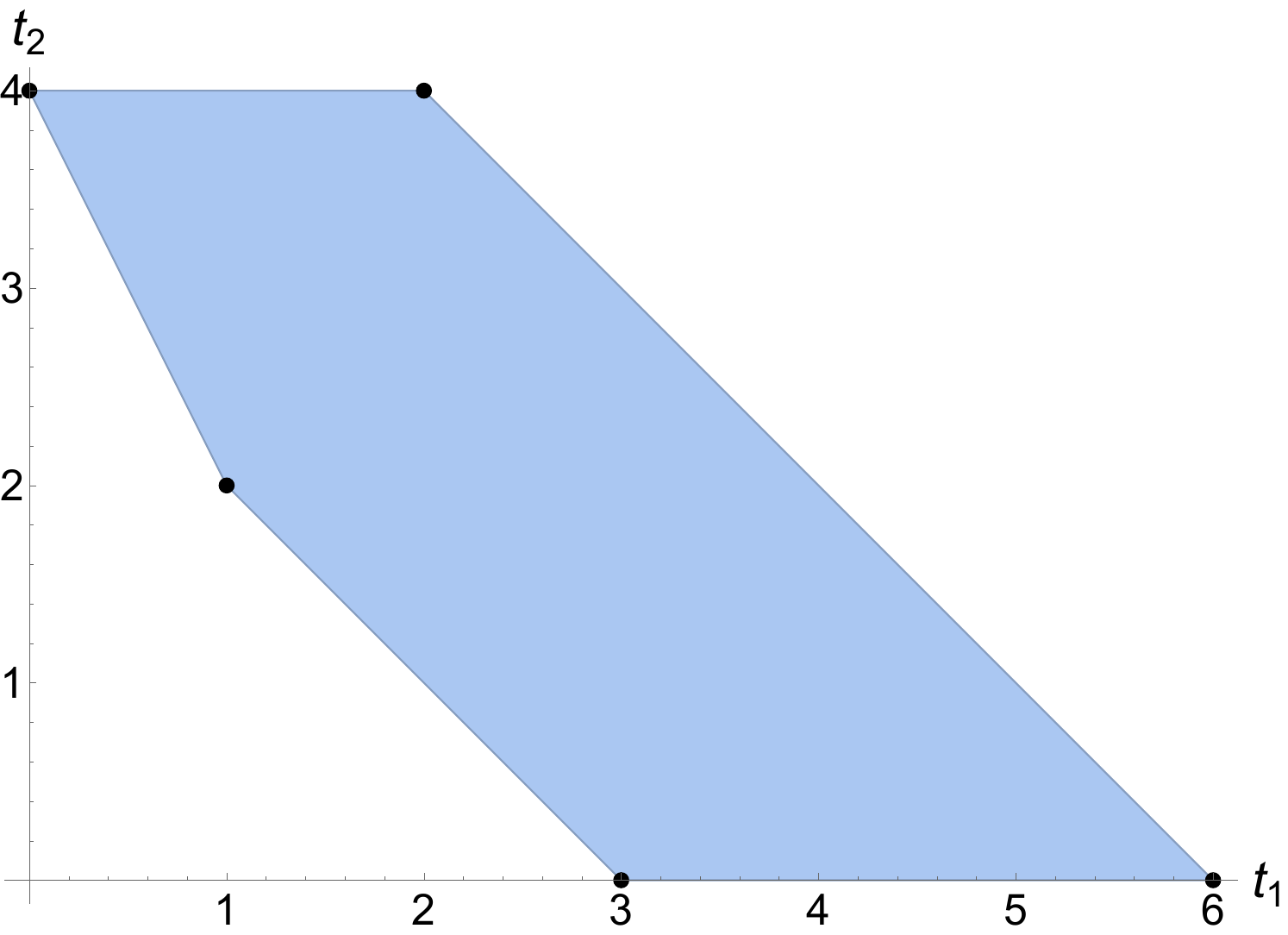}
    \end{subfigure}%
    ~ \hspace{5mm}
    \begin{subfigure}[t]{0.48\textwidth}
        \centering
\includegraphics[width=0.87\textwidth]{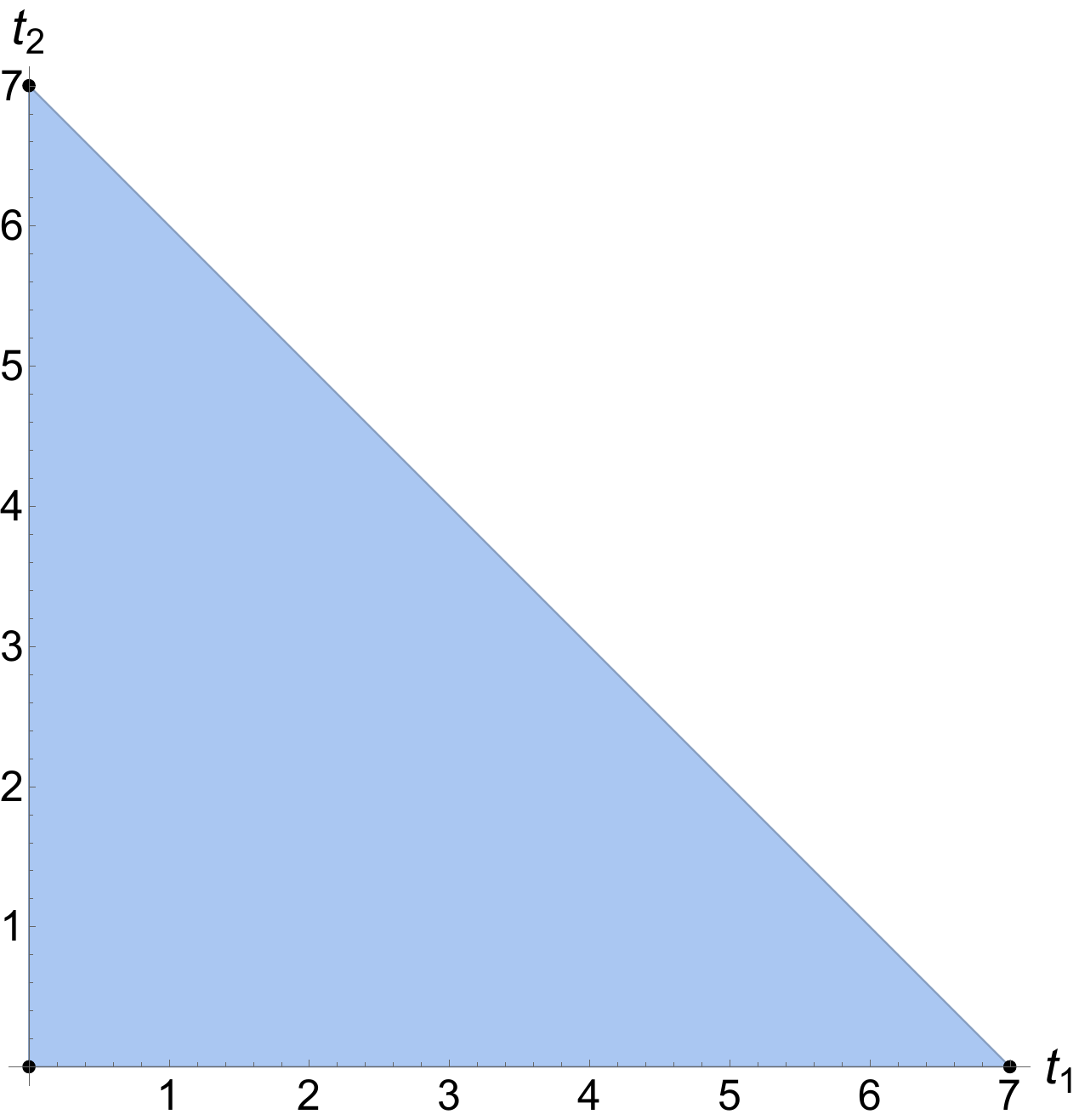}
    \end{subfigure}
    \caption{The Edge-Length Polytope $EL(n)$ for $n=6$ (left) and $n=7$ (right)} \label{fig:EL67}

\end{figure}

\begin{figure}[t!]
\centering
    \begin{subfigure}[t]{0.48\textwidth}
        \centering
\includegraphics[width=0.9\textwidth]{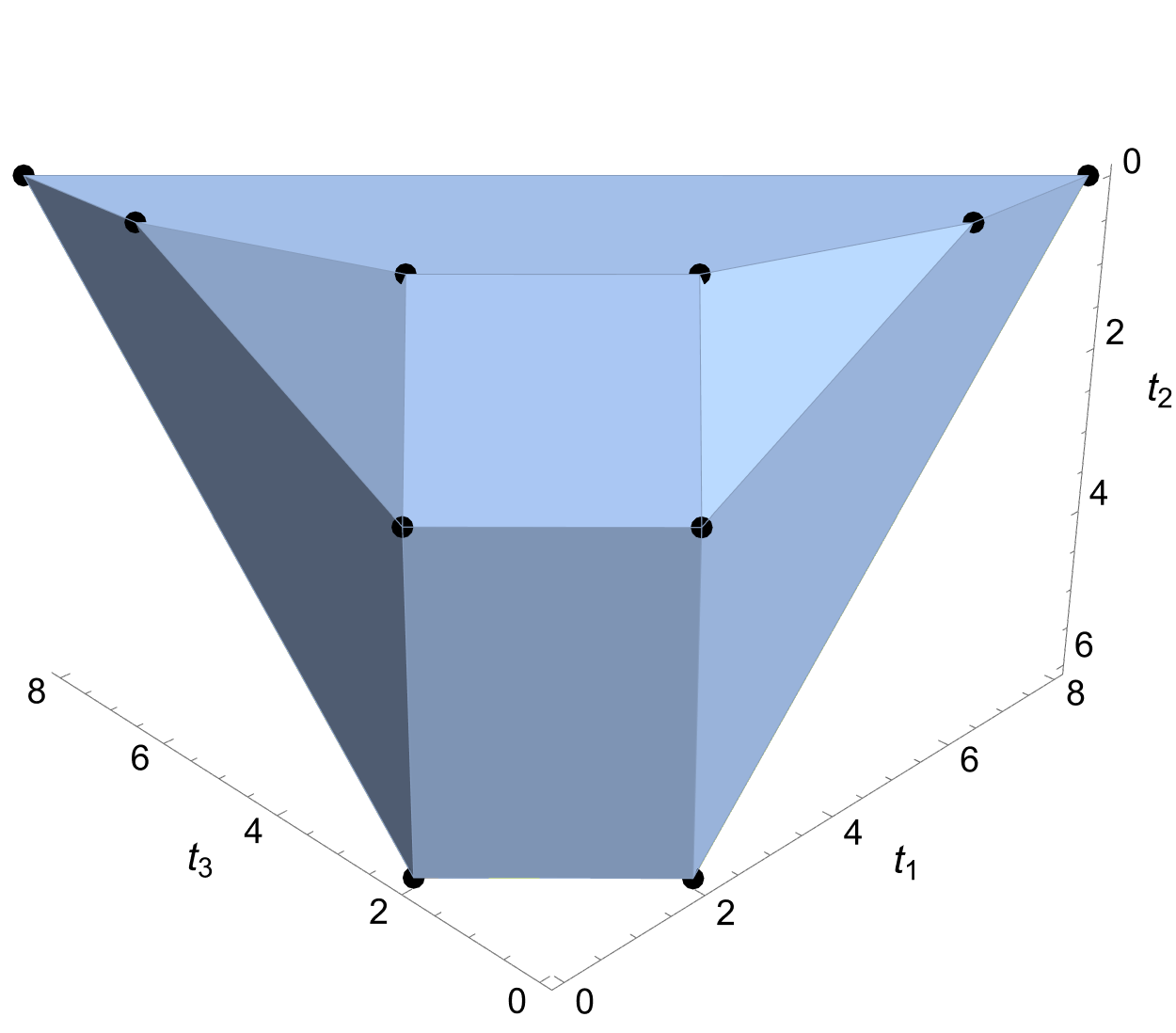}
    \end{subfigure}%
    ~ \hspace{5mm}
    \begin{subfigure}[t]{0.48\textwidth}
        \centering
\includegraphics[width=0.95\textwidth]{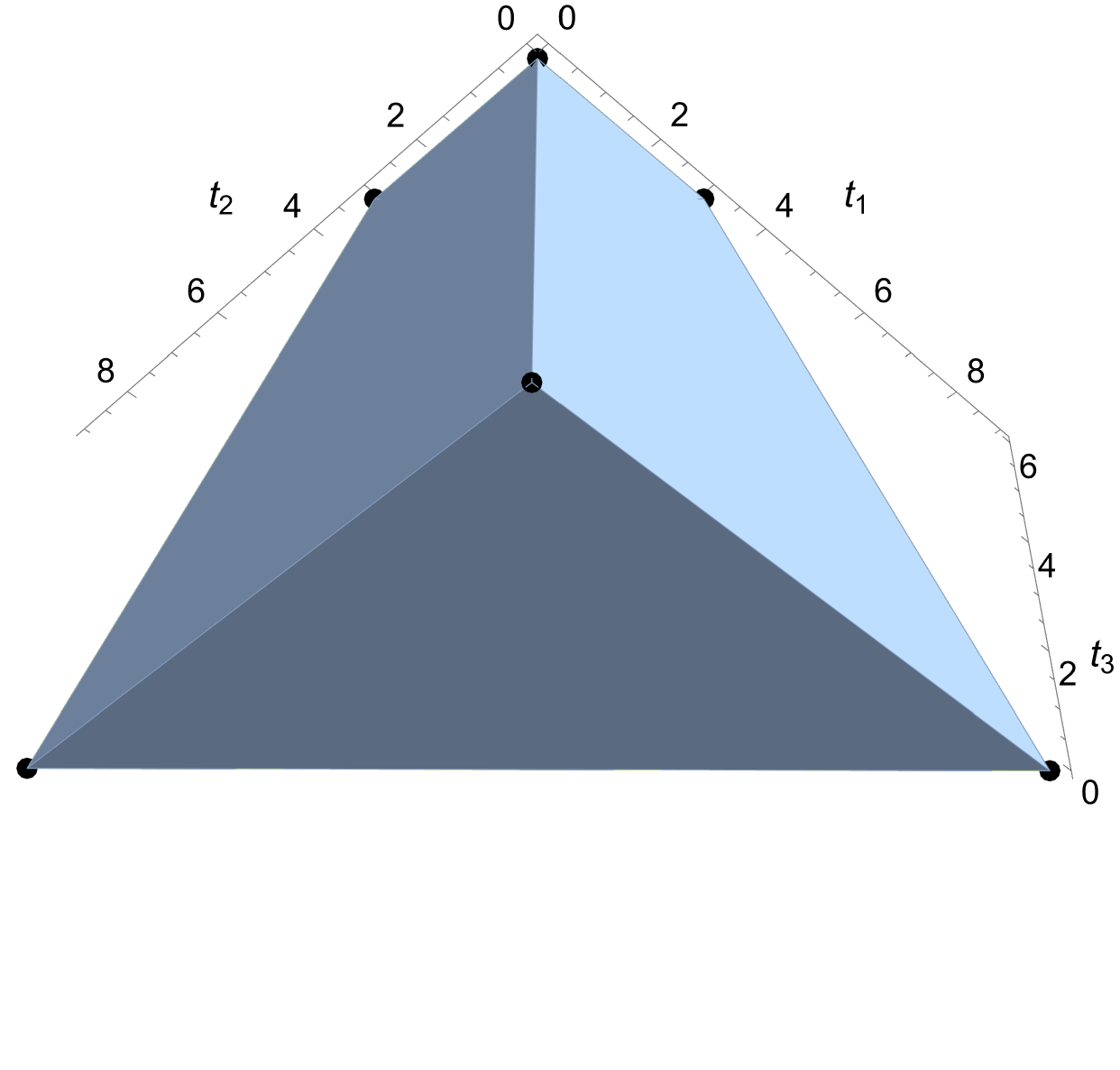}
    \end{subfigure}
    \caption{The Edge-Length Polytope $EL(n)$ for $n=8$ (left) and $n=9$ (right)} \label{fig:EL89}

\end{figure}

One immediate observation from Table \ref{tab:PolyCounts} is that, unlike for $STSP(n)$, the vertex counts of $EL(n)$ do not monotonically increase with $n$. Indeed, they are particularly small whenever $n$ is prime.  This follows from Garfinkel \cite{Gar77}'s classic 70's result that circulant TSP can be easily and efficiently solved whenever the number of vertices $n$ is prime.   The idea is relatively straightforward: Let $\phi(1) \in [d]$ be the length of a cheapest edge.  Then, since $n$ is prime, $\phi(1)$ and $n$ are relatively prime.  Hence, there is a Hamiltonian cycle using exactly $n$ edges of length-$\phi(1)$, attained by starting at vertex $1$ and proceeding to $1+\phi(1)$ then $1+2\cdot \phi(1)$ and so on, until returning to $1+n\cdot \phi(1) \equiv_n 1$ (and this will not repeat any intermediate vertices because $\phi(1)$ and $n$ are relatively prime). Any time there is a unique cheapest edge length $\phi(1)$, this tour will be the unique optimal solution. If there are multiple cheapest edge-lengths, then there is not a unique optimal solution.  When $n$ is prime, there are exactly $d$ vertices of $EL(n)$; these vertices have the form $$(n, 0, ..., 0), (0, n, 0, ..., ), ..., (0, 0, ..., 0, n) \in \R^d.$$  Summarizing:   

\begin{prop}
Let $n$ be prime.  The polytope $EL(n)$ has exactly $d=\lfloor \frac{n}{2}\rfloor$ vertices and  $d=\lfloor \frac{n}{2}\rfloor$ facets.  Specifically, the polytope $EL(n)$ is a scaled simplex $\{x\in \R^n: x_1+...+x_n = n, x_i \geq 0 \text{ for } i=1, ..., d\}.$
\end{prop}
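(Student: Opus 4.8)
The plan is to identify $EL(n)$ directly with the scaled simplex
$\Delta := \{t \in \R^d : \sum_{i=1}^d t_i = n,\ t_i \ge 0 \text{ for } i=1,\dots,d\}$
by a double-inclusion argument, and then read off the vertex and facet counts from the simplex structure. (Note the statement's ``$\R^n$'' and ``$x_1+\cdots+x_n$'' should read $\R^d$ and $x_1+\cdots+x_d$.)

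For the containment $EL(n) \subseteq \Delta$, I would observe that every Hamiltonian cycle on $[n]$ uses exactly $n$ edges, so its edge-length vector $(t_1,\dots,t_d)$ satisfies $\sum_{i=1}^d t_i = n$ with each $t_i \ge 0$, being a sum of nonnegative incidence values. Hence every edge-length vector lies in $\Delta$, and since $EL(n)$ is their convex hull while $\Delta$ is convex, we get $EL(n) \subseteq \Delta$.

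For the reverse containment $\Delta \subseteq EL(n)$, I would exhibit each vertex $n e_i$ of $\Delta$ (for $i=1,\dots,d$) as the edge-length vector of an actual Hamiltonian cycle. Because $n$ is prime and $1 \le i \le d < n$, we have $\gcd(i,n)=1$, so the walk $1 \to 1+i \to 1+2i \to \cdots \pmod n$ visits all $n$ vertices before returning to $1$, giving a Hamiltonian cycle all of whose $n$ edges have length exactly $i$ (here $i \le \lfloor n/2\rfloor$ ensures $\min\{i,n-i\}=i$, so the length is $i$ and not $n-i$). Its edge-length vector is therefore $n e_i$, so $n e_i \in EL(n)$ for each $i$, and by convexity $\Delta = \text{conv}\{n e_1,\dots,n e_d\} \subseteq EL(n)$. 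This is exactly the point where primality enters, via $\gcd(i,n)=1$.

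Combining the two inclusions gives $EL(n)=\Delta$. The points $n e_1,\dots,n e_d$ are affinely independent, so $\Delta$ is a $(d-1)$-dimensional simplex: it has exactly $d$ vertices (the $n e_i$) and exactly $d$ facets, namely the faces $\{t \in \Delta : t_i = 0\}$ obtained by tightening each nonnegativity constraint (each such face contains the $d-1$ vertices $n e_j$ with $j \neq i$ and is thus genuinely of dimension $d-2$, confirming it is a facet). I expect the only step needing care to be the reverse containment, specifically verifying that the arithmetic-progression walk really is a Hamiltonian cycle and that its edges have length $i$ rather than $n-i$; the forward containment and the simplex combinatorics are routine.
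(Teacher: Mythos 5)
Your proof is correct, including your note that the statement's ``$\R^n$'' and ``$x_1+\cdots+x_n$'' should read $\R^d$ and $x_1+\cdots+x_d$. It shares with the paper the key construction --- the arithmetic-progression tour $1 \to 1+i \to 1+2i \to \cdots$, which is Hamiltonian precisely because $\gcd(i,n)=1$ for prime $n$, and whose edge-length vector is $n e_i$ --- but the surrounding logic is genuinely different. The paper argues through the optimization characterization of vertices: minimizing a circulant cost vector over $EL(n)$ is exactly a circulant TSP instance, and Garfinkel's theorem says that for prime $n$ an optimal tour always uses a single edge length; hence every vertex of $EL(n)$, being the unique optimum of some cost vector, must be one of the points $n e_i$, while each $n e_i$ is itself a vertex since it is the unique optimum whenever length $i$ is strictly cheapest. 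Your sandwich argument replaces the appeal to Garfinkel's optimality result with the elementary outer inclusion $EL(n) \subseteq \Delta$ (every Hamiltonian cycle has exactly $n$ edges), which makes the proof self-contained and establishes the claimed H-description $EL(n)=\Delta$ directly rather than via the vertex list. What the paper's route buys is a template that scales: for $n=p^2$ the polytope is no longer a simplex, so no trivial outer bound pins it down, and the paper's vertex count there proceeds exactly by the ``vertices are unique optima of circulant instances'' argument, with the Beal--Bouabida--Gutekunst--Rustad characterization playing the role Garfinkel plays here. Your route is the cleaner proof of this particular proposition; the paper's is the one that generalizes to the next case.
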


Beal, Bouabida, Gutekunst, and Rustad \cite{Bea24} showed that circulant TSP can also be easily and efficiently solved whenever the number of vertices is of the form $n=p^2$ where $p\geq 3$ is a prime. Let $\phi(1)\in [d]$ be the length of a cheapest edge. If $\phi(1)$ is relatively prime to $n$ (i.e., $\phi(1)$ is not a multiple of $p$), then an optimal solution involves using $n$ edges of length-$\phi(1)$.  Otherwise, let $\phi(i) \in [d]$ be the length of a cheapest edge relatively prime to $n$ (so that $\phi(1), ..., \phi(i-1)$ are all multiples of $p$).  Then Beal, Bouabida, Gutekunst, and Rustad \cite{Bea24} show that there is always an optimal tour using $(n-p)$ edges of length $\phi(1)$ and $p$ edges of length $\phi(i)$.  One of these two types of tours will always be optimal, and it is straightforward to construct tours where they are the unique optimal solution (e.g. for the latter case, where $c_{\phi(1)}=0, c_{\phi(i)}=1,$ and all other edges have cost 2).  This allows us to quickly compute the number of vertices of $EL(n)$ whenever $n$ is a prime-squared:

\begin{prop}
Let $n=p^2$ where $p\geq 3$ is a prime.  The polytope $EL(n)$ has exactly $\frac{p^3-p}{4}$ vertices
\end{prop}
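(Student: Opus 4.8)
The plan is to leverage the characterization of optimal circulant tours due to Beal, Bouabida, Gutekunst, and Rustad together with the fact that a point of $EL(n)$ is a vertex precisely when it is the \emph{unique} optimal edge-length vector of some circulant TSP instance. First I would set up two candidate families. Writing $e_j\in\R^d$ for the $j$-th standard unit vector, and recalling that $d=\frac{p^2-1}{2}$ since $p$ is odd, define the \emph{Type A} vectors $n\,e_j$ for each $j\in[d]$ with $\gcd(j,n)=1$, and the \emph{Type B} vectors $(n-p)\,e_i+p\,e_j$ for each pair $(i,j)$ with $i\in[d]$ a multiple of $p$ and $j\in[d]$ satisfying $\gcd(j,n)=1$. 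Since the only prime factor of $n=p^2$ is $p$, being coprime to $n$ is the same as not being a multiple of $p$, and every multiple of $p$ in $[d]$ has the form $kp$ with $1\le k\le \frac{p-1}{2}<p$, so $\gcd(kp,n)=p$; this is exactly the structure Beal et al.\ exploit. The claim to establish is that the vertices of $EL(n)$ are precisely the Type A and Type B vectors.

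For the inclusion that each candidate is a vertex, I would exhibit an explicit cost vector realizing each as a unique optimum. For a Type A vector $n\,e_j$, set $c_j=0$ and $c_k=1$ for $k\neq j$; since $\gcd(j,n)=1$ there is a Hamiltonian cycle using $n$ edges of length $j$, of cost $0$, while any tour using an edge of a different length costs at least $1$, so $n\,e_j$ is the unique minimizer. For a Type B vector, set $c_i=0$, $c_j=1$, and $c_k=2$ for all other $k$. The cost of a tour with edge-length vector $t$ is $t_j+2\big(n-t_i-t_j\big)=2n-(2t_i+t_j)$, so minimizing cost is equivalent to maximizing $2t_i+t_j$. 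The main technical input here is the cycle version of the divisor constraint underlying Conjecture~\ref{conj}: because $\gcd(i,n)=p$, any Hamiltonian cycle uses at most $n-p$ edges of length divisible by $p$ (such edges stay within the $p$ residue classes mod $p$, and a cycle needs at least $p$ edges crossing these classes), and in particular $t_i\le n-p$. Combined with $t_i+t_j\le n$, this gives $2t_i+t_j=t_i+(t_i+t_j)\le (n-p)+n=2n-p$, with equality forcing $t_i=n-p$ and $t_j=p$, hence all other coordinates $0$. Since Beal et al.\ guarantee this edge-length vector is attained by an actual Hamiltonian cycle, the Type B vector is the unique minimizer and therefore a vertex.

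For the reverse inclusion, I would argue that any vertex $t^\star$ is the unique optimal edge-length vector for some cost $c$; ordering $c$ via a cost permutation $\phi$ and invoking Beal et al.'s statement that one of the two tour types is always optimal, some optimal tour is Type A or Type B, and by uniqueness its edge-length vector equals $t^\star$, so $t^\star$ lies in one of the two families. It then remains to count. The families are disjoint (one has support size $1$, the other size $2$, using $n-p>0$ and $p>0$), and within each family distinct indices give distinct vectors. The number of $j\in[d]$ coprime to $n$ is $d-\frac{p-1}{2}=\frac{p^2-1}{2}-\frac{p-1}{2}=\frac{p(p-1)}{2}$, while the number of multiples of $p$ in $[d]$ is $\frac{p-1}{2}$. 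Hence there are $\frac{p(p-1)}{2}$ Type A vertices and $\frac{p-1}{2}\cdot\frac{p(p-1)}{2}=\frac{p(p-1)^2}{4}$ Type B vertices, for a total of
\[
\frac{p(p-1)}{2}+\frac{p(p-1)^2}{4}=\frac{p(p-1)\big(2+(p-1)\big)}{4}=\frac{p(p-1)(p+1)}{4}=\frac{p^3-p}{4}.
\]

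The step I expect to be most delicate is the uniqueness argument in the Type B case: I must be careful that the constraint $t_i\le n-p$ holds for Hamiltonian \emph{cycles} (not just paths, where it is stated in the excerpt), and that equality in $2t_i+t_j\le 2n-p$ genuinely pins down a single edge-length vector rather than merely the optimal value. The completeness direction likewise hinges on reading Beal et al.'s result as asserting that for \emph{every} cost ordering one of the two tour types is optimal, so that a unique optimum is necessarily of one of these two forms.
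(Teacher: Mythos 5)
Your proof is correct and takes essentially the same approach as the paper: both rely on the Beal--Bouabida--Gutekunst--Rustad characterization of optimal circulant tours into the two families $n\,e_j$ (with $\gcd(j,n)=1$) and $(n-p)\,e_i+p\,e_j$ (with $\gcd(i,n)=p$, $\gcd(j,n)=1$), and then count these families identically to arrive at $\frac{p(p-1)}{2}+\frac{p(p-1)^2}{4}=\frac{p^3-p}{4}$. Your additional detail on uniqueness (the explicit costs $c_i=0$, $c_j=1$, all others $2$, and the bound $t_i\le n-p$ for cycles) and on completeness simply makes rigorous what the paper delegates to the cited work and to its discussion preceding the proposition.
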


\begin{proof}
    Let $e_i\in \R^d$ denote the vector whose $i$-th coordinate is 1 and where all other coordinates are 0. By Beal, Bouabida, Gutekunst, and Rustad \cite{Bea24}, there are two types of vertices:
    \begin{enumerate}
        \item Vertices corresponding to Hamiltonian cycles using a single edge length, where the length of that edge is relatively prime to $n$: those of the form $n\cdot e_i$ for $1\leq i\leq d$ such that $\gcd(n, i)=1$.
        \item Vertices corresponding to a Hamiltonian cycle using $n-p$ edges whose length is a multiple of $p$, and $p$ edges of a length relatively prime to $p$: those of the form $(n-p) \cdot e_s+p \cdot e_{\ell}$ where $1 \leq s, \ell \leq d$, $\gcd(s, n)=p$, and $\gcd(\ell, n)=1$.
    \end{enumerate}
    To count these vertices, note that there are $d=\lfloor n/2 \rfloor$ edge lengths, and $\lfloor p/2\rfloor $ of these will be not be relatively prime to $n$ (specifically, the edge lengths $p, 2\cdot p, ..., \lfloor p/2 \rfloor \cdot p$).  There are thus $$\left\lfloor \frac{n}{2} \right\rfloor - \left\lfloor \frac{p}{2} \right\rfloor = \frac{n-p}{2}$$ vertices of the first type (using the fact that $n, p$ are odd to simplify floors).  Similarly, we can count vertices of the second type:
    \begin{align*}
      \left(  \left\lfloor \frac{n}{2} \right\rfloor - \left\lfloor \frac{p}{2} \right\rfloor\right)\left\lfloor \frac{p}{2} \right\rfloor &= \left(\frac{n-p}{2}\right) \left(\frac{p-1}{2}\right).
    \end{align*}
    In total, $EL(n)$ thus has
\begin{align*}
    \left(\frac{n-p}{2}\right) +  \left(\frac{n-p}{2}\right) \left(\frac{p-1}{2}\right) & = \frac{n-p}{2} \left(1+\frac{p-1}{2}\right)\\
    &=\frac{(n-p)(p+1)}{4} \\
    &= \frac{n-p^2+np-p}{4} \\
    &= \frac{(n-1)p}{4}
\end{align*} 
    vertices.
    \hfill 
\end{proof}

While the number of vertices of $EL(n)$ grows linearly when $n$ is prime, and like $n^{3/2}$ when $n$ is a prime-squared, our main result is that the number of vertices of $EL(n)$ is not bounded by any polynomial.  In particular, in Section \ref{sec:lb} we prove the following:

\begin{restatable}{thm}{mainthm}\label{mainthm}
Let $n=2^k$ for $k\in \Z, k\geq 4$.  The polytope $EL(n)$ has at least $$2^{k-2} \prod_{i=1}^{k-3}(2^i+1)>2^{\frac{k^2-3k-4}{2}}$$ vertices.
\end{restatable}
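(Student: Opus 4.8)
The plan is to produce the claimed number of distinct vertices of $EL(n)$ by exhibiting, for each member of a large family of edge-length vectors, a circulant cost vector for which that edge-length vector is the \emph{unique} minimum-cost Hamiltonian cycle; by the vertex/extreme-point equivalence recalled above, each such vector is then a vertex of $EL(n)$. The organizing principle is that, because $n=2^k$, the quantity $g_i^\phi=\gcd(n,\phi(1),\dots,\phi(i))$ is always a power of $2$ — namely $2$ raised to the smallest $2$-adic valuation among $\phi(1),\dots,\phi(i)$ — so the entire greedy Bach--Luby--Goldwasser calculation is governed only by the $2$-adic valuations $v_2(\cdot)$ of the lengths involved. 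I would first record the cycle analog of Proposition \ref{prop:HP}: a minimum-cost Hamiltonian cycle uses $(g_{i-1}^\phi-g_i^\phi)$ edges of length $\phi(i)$ for $i<\ell$ and $g_{\ell-1}^\phi$ edges of length $\phi(\ell)$. This is exactly a BLG path closed by one extra edge of the shortest odd length in its support, so the vertices I construct are precisely the BLG paths of Section \ref{sec:combo}, closed up into cycles.

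Next I would describe the family. A vertex-candidate is indexed by a chain of edge-lengths $e_0,e_1,\dots,e_r$ whose valuations strictly decrease, $v_2(e_0)>v_2(e_1)>\cdots>v_2(e_r)=0$ (so $e_r$ is odd, which is forced since a Hamiltonian cycle on $2^k$ vertices needs $g$ to reach $1$). Writing $v_2(e_{-1})=k$, the associated vector $\mathbf{t}^\star$ puts $2^{v_2(e_{i-1})}-2^{v_2(e_i)}$ edges on length $e_i$ for $0\le i\le r-1$, and $2^{v_2(e_{r-1})}$ edges on the odd length $e_r$ (the ``$+1$'' relative to the path being the closing edge). To make each $\mathbf{t}^\star$ provably a vertex I would assign strictly increasing costs $c_{e_0}<c_{e_1}<\cdots<c_{e_r}$ and a prohibitively large cost to every other length. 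Since the graph on the lengths $\{e_0,\dots,e_r\}$ has $\gcd=1$, it is Hamiltonian (Burkard and Sandholzer \cite{Burk91}), so a minimum-cost cycle avoids the expensive lengths entirely, and the strict cost ordering forces the greedy counts at every step, making $\mathbf{t}^\star$ the unique optimizer. Distinct chains have distinct supports and hence give distinct vertices.

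It then remains to count the chains. There are $2^{k-2}$ odd lengths (the mandatory choice of $e_r$), and exactly $2^{k-2-j}$ lengths of valuation $j$. Restricting the intermediate valuations to lie in $\{1,\dots,k-3\}$, with each such level optionally included, already produces $2^{k-2}\prod_{j=1}^{k-3}\bigl(1+2^{k-2-j}\bigr)=2^{k-2}\prod_{i=1}^{k-3}(2^i+1)$ distinct vertices, which is the stated bound. The concluding inequality is routine: $\prod_{i=1}^{k-3}(2^i+1)>\prod_{i=1}^{k-3}2^i=2^{(k-2)(k-3)/2}$, whence $2^{k-2}\prod_{i=1}^{k-3}(2^i+1)>2^{(k-2)(k-1)/2}>2^{(k^2-3k-4)/2}$.

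The main obstacle is the uniqueness half of the vertex argument: verifying that, for the constructed cost vector, no competing Hamiltonian cycle has a different edge-length vector of equal cost. This reduces to showing the greedy counts are forced at every step, which in turn rests on the cycle analog of Proposition \ref{prop:HP}. I expect the genuinely delicate point to be establishing that analog — namely that the greedy edge counts are realized by an actual \emph{simple} Hamiltonian cycle, i.e. that the closing edge of length $e_r$ can always be installed — rather than the counting, which is elementary once the valuation-chain parameterization is in place. Restricting the intermediate valuations to at most $k-3$ is a convenience that keeps this existence-and-uniqueness bookkeeping clean while still yielding the desired superpolynomial count; I do not expect it to be necessary, only expedient.
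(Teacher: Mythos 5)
Your strategy is essentially the paper's own (your valuation chains are exactly its encoding sequences, and certifying vertices via unique optimality for a hand-built circulant cost vector is the same final step), but there is a genuine gap at precisely the point you flagged as delicate and then waved off. The ``cycle analog of Proposition \ref{prop:HP}'' that you posit is \emph{false} whenever $g_{\ell-1}^{\phi}\geq 4$, i.e.\ whenever your chain skips valuation $1$. Concrete counterexample: take $n=16$ (or $32$) and the chain $e_0=4$, $e_1=1$, so $\mathbf{t}^\star$ has $n-4$ edges of length $4$ and $4$ edges of length $1$. No Hamiltonian cycle realizes this vector: deleting the four length-$1$ edges from such a cycle would leave exactly one spanning path in each residue class modulo $4$; each class under length-$4$ edges is a cycle graph, so each spanning path has displacement $\pm 4 \pmod{n}$; and since the four length-$1$ crossings must visit all four classes cyclically, they are all $+1$ or all $-1$. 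The total displacement around the cycle is then $(\pm 4)+(\pm4\pm4\pm4\pm4)$, which is never $\equiv 0 \pmod{16}$ (nor $\pmod{32}$). So for such chains $\mathbf{t}^\star$ is not the edge-length vector of \emph{any} Hamiltonian cycle, hence cannot be a vertex of $EL(n)$; your Burkard--Sandholzer appeal only gives \emph{some} Hamiltonian cycle in $C\langle\{e_0,\dots,e_r\}\rangle$, not one with the prescribed counts, which is exactly what fails. This is why the paper restricts to encoding sequences with $s_{k-1}\neq x$: forcing a length with $\gcd(\cdot,n)=2$ into the chain makes $g_{\ell-1}^{\phi}=2$, the BLG path's last step merges exactly two translates with endpoints $1$ and $1+\phi(\ell)$, and the closing edge of length $\phi(\ell)=s_k$ can always be installed. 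Your restriction goes the wrong way: you exclude the harmless top valuations $k-2,k-1$ while leaving the critical valuation-$1$ level optional.

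The gap also breaks the count. Repairing it by making valuation $1$ mandatory (while keeping your exclusion of valuations $k-2,k-1$) leaves $2^{k-2}\cdot 2^{k-3}\prod_{i=1}^{k-4}(2^i+1)$ chains, which is strictly below the claimed $2^{k-2}\prod_{i=1}^{k-3}(2^i+1)$ by the factor $2^{k-3}/(2^{k-3}+1)$ --- still superpolynomial, but not the stated bound. To prove the theorem as stated you must also re-admit lengths $n/2$ and $n/4$ as optional links, as the paper does: its count of sequences with $s_{k-1}\neq x$ is $2^{2k-3}\prod_{i=1}^{k-4}(2^i+1)\geq 2^{k-1}\prod_{i=1}^{k-3}(2^i+1)$, after which it concedes a factor of $2$ for possible coincidences among extended vectors. (One thing you do get right and could salvage: on the valid subfamily, distinct chains have distinct supports and hence distinct extended edge-length vectors, so the paper's factor-of-$2$ concession is avoidable.)
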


Note that  $2^{\frac{k^2-3k-4}{2}}=\frac{1}{4} n^{\f{\log_2(n)-3}{2}},$ so that this result indeed shows a super-polynomial number of vertices.  Hence, a brute-force algorithm for circulant TSP that enumerates and checks all vertices of $EL(n)$ may be efficient for certain cases (like $n$ prime or a prime-squared), but will not in general be efficient.  

To prove Theorem \ref{mainthm}, we will proceed in two main steps.  First, in Section \ref{sec:combo}, we consider a set of Hamiltonian paths closely related to circulant TSP: these are BLG paths, and are essentially the minimum-cost Hamiltonian paths on circulant instances. As a technical note, every Hamiltonian path could be a minimum-cost Hamiltonian path on a circulant instance if, e.g., all edge-lengths have the same cost.  More precisely, BLG paths are thus ``Hamiltonian paths that can arise as the unique optimal solution with some set of circulant costs, using the algorithm of Bach, Luby, and Goldwasser.''  To count BLG paths, we introduce a notion of \emph{encoding sequences} and show that they are in bijection with minimum-cost Hamiltonian paths.  We then count the number of encoding sequences.

Second, in Section \ref{sec:lb}, we show that these Hamiltonian paths can be extended to a Hamiltonian cycle that give rise to vertices of $EL(n)$.  To simplify our arguments and get a super-polynomial lower bound, we take a bit of a cavalier approach and show that we get at least $\frac{1}{4}2^k \prod_{i=1}^{k-3}(2^i+1)$ unique vertices.

\section{Intermezzo: Counting Hamiltonian Paths}\label{sec:combo}
Recall that Proposition \ref{prop:HP} gives a combinatorial formula for the cost/edges of a minimum-cost Hamiltonian paths on a circulant instance: given a cost permutation $\phi$ sorting the edge costs from least to greatest,  and an associated $g$-sequence defined by $$g_i^{\phi}=\begin{cases} n, & i=0 \\ \gcd\left(\phi(i), g_{i-1}^{\phi}\right), & \text{ else,} \end{cases}$$ a minimum-cost path uses $(g_{i-1}^{\phi}-g_i^{\phi})$ edges of length $\phi(i)$, for $i=1, ..., d$.

Throughout this section, we will let $n=2^k$ be a power of 2 (and $d=\lfloor n/2 \rfloor = 2^{k-1}$).  We will also assume, without loss of generality, that $c_{\phi(1)} < c_{\phi(2)}<\cdots < c_{\phi(d)},$ so that $\phi$ uniquely sorts the edge-lengths in increasing cost (and all edge-lengths have unique costs).  This is without loss of generality as, e.g., replacing each $c_{\phi(i)}$ by $c_{\phi(i)}+i$ will yield the same Hamiltonian path according to Proposition \ref{prop:HP}, but will force the edge-length costs to be strictly increasing (when sorted by $\phi$). 

\subsection{Characterizing BLG Paths}

The first step  in our proof of Theorem \ref{mainthm} is to give a combinatorial encoding, \emph{encoding sequences}, for BLG paths; we will then show that this is a superpolynomially-sized set of Hamiltonian paths on $n$ vertices (when $n$ is a power of 2).

First, we consider the structure of $g$-sequences when $n$ is a power of 2:
\begin{cm}\label{cm:small}
    Let $n=2^k$.  Then $n=g_0^{\phi} \geq g_1^{\phi} \geq g_2^{\phi} \geq \cdots \geq g_d^{\phi}$ and each $g_i^{\phi}$ is a power of $2$.
\end{cm}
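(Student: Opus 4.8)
The plan is to exploit the recursive definition $g_i^\phi = \gcd(\phi(i), g_{i-1}^\phi)$ directly, together with the elementary fact that a greatest common divisor divides each of its arguments. First I would observe that, for every $i \geq 1$, the quantity $g_i^\phi = \gcd(\phi(i), g_{i-1}^\phi)$ divides $g_{i-1}^\phi$; in particular $g_i^\phi \leq g_{i-1}^\phi$. Chaining these divisibilities (equivalently, inequalities) from $i=1$ up through $i=d$ yields the weakly decreasing chain $n = g_0^\phi \geq g_1^\phi \geq \cdots \geq g_d^\phi$, which is the first assertion.

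For the second assertion I would argue by induction on $i$ that each $g_i^\phi$ is a power of $2$. The base case is immediate, since $g_0^\phi = n = 2^k$. For the inductive step, assume $g_{i-1}^\phi$ is a power of $2$; the divisibility $g_i^\phi \mid g_{i-1}^\phi$ established above forces $g_i^\phi$ to be a divisor of a power of $2$, and every such divisor is itself a power of $2$ (including $1 = 2^0$). This closes the induction and finishes the claim. Note that $g_i^\phi$ is automatically a positive integer, since $\phi(i) \geq 1$ and $g_{i-1}^\phi \geq 1$, so speaking of it as a power of $2$ is unambiguous.

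There is essentially no obstacle here: the whole statement rests on two basic properties of the gcd, namely that it divides both of its arguments and that the divisors of $2^k$ are exactly the powers of $2$. The only point worth flagging is that the single divisibility $g_i^\phi \mid g_{i-1}^\phi$ does double duty, simultaneously supplying monotonicity and propagating the ``power of $2$'' property through the induction, so both parts of the claim can be handled in one pass. One could alternatively read both conclusions off the closed form $g_i^\phi = \gcd(n, \phi(1), \ldots, \phi(i))$ recorded earlier in the excerpt, but I expect the recursive route to keep the induction cleanest.
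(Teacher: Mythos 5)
Your proposal is correct and follows essentially the same route as the paper's own proof: both exploit the recursion $g_i^{\phi}=\gcd(\phi(i), g_{i-1}^{\phi})$ to get $g_i^{\phi}\mid g_{i-1}^{\phi}$, which simultaneously yields the weakly decreasing chain and (since every divisor of a power of $2$ is a power of $2$) the second assertion. The only cosmetic difference is that you organize the divisibility argument as an explicit induction, while the paper chains the divisibilities directly down to $g_0^{\phi}=n=2^k$; these are the same argument.
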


\begin{proof}
    This claim immediately follows from the definition of a $g$-sequence: $g_i^{\phi}=\gcd(\phi(i), g_{i-1}^{\phi})$, so $g_i^{\phi}$ divides $g_{i-1}^{\phi}$ (which, in turn, divides $g_{i-2}^{\phi}$, and so on).  Hence,  each $g_i^{\phi}$ divides $g_0^{\phi}=n=2^k$, which means that  each $g_i^{\phi}$ is a power of 2, and the $g$ sequence is weakly decreasing.
\end{proof}

Similarly, a minimum-cost Hamiltonian path only uses edges of length $\phi(i)$ if $g_i^{\phi} < g_{i-1}^{\phi}$.  This implies that a minimum-cost Hamiltonian path on $n=2^k$ vertices can only use at most $k$ different edge-lengths: one for each time where the $g$ sequence can decrease. In particular:
\begin{prop}\label{prop:lim}
    Let $n=2^k.$  Partition the possible edge-lengths $[d]$ into sets $S_1, S_2, ..., S_{k}$ where $$S_i=\{1\leq j\leq d: \gcd(n, j)=2^{k-i}\}, i=1, ..., k.$$  Then a minimum-cost Hamiltonian path will use at most one edge-length from each $S_i$. Specifically, let $i_1, i_2\in S_i$.  If $\phi^{-1}(i_2)>\phi^{-1}(i_1)$ (so that $i_2$ comes after $i_1$ in $\phi$, or equivalently, $c_{i_2}>c_{i_1}$), then a minimum-cost Hamiltonian path will not use any edges of length $i_2$.  
\end{prop}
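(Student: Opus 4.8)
The plan is to translate everything into statements about the $2$-adic valuation of edge-lengths, after which the proposition becomes almost immediate from Proposition~\ref{prop:HP} and Claim~\ref{cm:small}. For a positive integer $j$, write $\nu_2(j)$ for the largest exponent $a$ with $2^a \mid j$. Since every edge-length satisfies $1\le j\le d=2^{k-1}<2^k=n$, we have $\gcd(n,j)=2^{\nu_2(j)}$, so the defining condition $\gcd(n,j)=2^{k-i}$ of $S_i$ is exactly $\nu_2(j)=k-i$. In other words, the partition $S_1,\dots,S_k$ simply groups the edge-lengths by their $2$-adic valuation, with larger index $i$ corresponding to smaller valuation.

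First I would record which edge-lengths a minimum-cost Hamiltonian path actually uses. By Proposition~\ref{prop:HP}, length $\phi(m)$ is used with positive multiplicity if and only if $g_m^\phi<g_{m-1}^\phi$. By Claim~\ref{cm:small} every $g_m^\phi$ is a power of $2$, and from the recursion $g_m^\phi=\gcd(\phi(m),g_{m-1}^\phi)$ one gets $\nu_2(g_m^\phi)=\min\{k,\nu_2(\phi(1)),\dots,\nu_2(\phi(m))\}$. Hence $g_m^\phi<g_{m-1}^\phi$ holds exactly when $\nu_2(\phi(m))<\min\{k,\nu_2(\phi(1)),\dots,\nu_2(\phi(m-1))\}$; that is, length $\phi(m)$ is used precisely when its valuation is a strict new running minimum among $\nu_2(\phi(1)),\dots,\nu_2(\phi(m))$ (the bound $k$ being automatic, since $\nu_2(\phi(m))\le k-1$).

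The proposition then follows by contradiction. Suppose $i_1,i_2\in S_i$ with $\phi^{-1}(i_2)>\phi^{-1}(i_1)$, and write $m_1=\phi^{-1}(i_1)<m_2=\phi^{-1}(i_2)$. Both lie in $S_i$, so $\nu_2(\phi(m_1))=\nu_2(\phi(m_2))=k-i$. If length $i_2=\phi(m_2)$ were used, the running-minimum criterion would force $\nu_2(\phi(m_2))<\nu_2(\phi(m_1))$ (since $m_1<m_2$), contradicting their equality. Thus $i_2$ is not used. Applying this to any two used lengths of a common $S_i$ shows at most one length per $S_i$ can be used.

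There is essentially no hard step here: the only point requiring care is the valuation identity $\nu_2(g_m^\phi)=\min\{k,\nu_2(\phi(1)),\dots,\nu_2(\phi(m))\}$, which follows inductively from the elementary fact that $\gcd(2^a,j)=2^{\min\{a,\nu_2(j)\}}$. The conceptual content is entirely in the observation that, over powers of $2$, the greedy $g$-sequence strictly drops precisely at a strict new minimum of the valuation sequence, so each valuation level $k-i$ --- equivalently, each set $S_i$ --- can trigger at most one such drop.
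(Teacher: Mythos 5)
Your proof is correct and takes essentially the same route as the paper's: the valuation identity $\nu_2(g_m^\phi)=\min\{k,\nu_2(\phi(1)),\dots,\nu_2(\phi(m))\}$ is a repackaging of Claim~\ref{cm:small} together with the paper's gcd computation $\gcd(k_2\cdot 2^{k-i},2^j)=2^j$ for $j\le k-i$, and your ``strict new running minimum'' criterion is precisely how the paper concludes $g_{\phi^{-1}(i_2)}=g_{\phi^{-1}(i_2)-1}$, hence zero edges of length $i_2$ by Proposition~\ref{prop:HP}. There are no gaps; the only difference is cosmetic.
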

For instance, when $n=2^4=16$, this proposition says that a Hamiltonian path can use at most one edge from each of the following sets: $S_1=\{8\}, S_2=\{4\}, S_3=\{2, 6\}, S_4=\{1, 3, 5, 7\}$.  These are, respectively, the sets of edge-lengths whose gcd with $n$ are 8, 4, 2, and 1.  Moreover, it can only use the cheapest edge-length in each set. 

\begin{proof}
    Suppose that we have two edge-lengths $i_1, i_2\in S_i$ whose gcd with $n$ is $2^{k-i}$, and where $\phi^{-1}(i_2)>\phi^{-1}(i_1)$ (i.e. $i_2$ is a more expensive edge).  Then we can write $i_1=k_1\cdot 2^{k-i}$ and  $i_2=k_2\cdot 2^{k-i}$, where both $k_1$ and $k_2$ are odd integers.  To simplify notation, we suppress the superscript $\phi$ on $g$ (i.e. writing $g_{\phi^{-1}(i_2)}$ instead of $g_{\phi^{-1}(i_2)}^{\phi}$).

    By Claim \ref{cm:small}, $g_{\phi^{-1}(i_1)}$ is a power of 2.  Moreover,  $g_{\phi^{-1}(i_1)}$  definitionally divides  $\phi(\phi^{-1}(i_1))=i_1=k_1 \cdot 2^{k-i}$ with $k_1$ odd.  Together, these imply that $g_{\phi^{-1}(i_1)}\leq 2^{k-i}$.  

    Also by Claim \ref{cm:small}, $g$-sequences are (weakly) decreasing.  Thus, since $g_{\phi^{-1}(i_1)}\leq 2^{k-i}$, we also have that $g_{\phi^{-1}(i_1)+1}\leq 2^{k-i}, g_{\phi^{-1}(i_1)+2}\leq 2^{k-i}, ..., g_{d}\leq 2^{k-i}$.  Since  $\phi^{-1}(i_2)>\phi^{-1}(i_1)$, this means $g_{\phi^{-1}(i_2)-1}\leq 2^{k-i}$ (and is again a power of 2).  Hence, we can write $g_{\phi^{-1}(i_2)-1}= 2^{j}$ for some $j\leq k-i$.  Finally, by definition,
    \begin{align*}
        g_{\phi^{-1}(i_2)} &= \gcd(\phi(\phi^{-1}(i_2)), g_{\phi^{-1}(i_2)-1}) \\
        &= \gcd(i_2, g_{\phi^{-1}(i_2)-1})  \\
        & = \gcd(k_2 \cdot 2^{k-i}, 2^j) \\
        &= 2^j \tag{since $j\leq k-i$ and $k_2$ is odd} \\
      &=  g_{\phi^{-1}(i_2)-1}.
    \end{align*}
Hence, $g_{\phi^{-1}(i_2)}=  g_{\phi^{-1}(i_2)-1}$ and thus Proposition \ref{prop:HP} indicates that there will be 0 edges of length $i_2$. \hfill
\end{proof}

Note that Proposition \ref{prop:lim} does not imply that a BLG path necessarily uses an edge-length from every $S_i$.  Instead, it says that the only possible edge-length a BLG path will use from $S_i$ is the cheapest.  However, we can readily characterize exactly when a BLG path uses an edge-length from $S_i$: it only uses the cheapest edge-length from $S_i$ if it is cheaper than all edge-lengths in $S_{i+1}, S_{i+2}, ....$, and $S_{k}$.  For instance, if $n=16$, we can only use an edge-length from $S_2=\{4\}$ if it is cheaper than all edge lengths in $S_3\cup S_4= \{1, 2, 3, 5, 6, 7\}$ (if any edge-length from $S_3\cup S_4$ was cheaper than 4, then the $g$-sequence would have already decreased to 1 or 2 by the time it reached $g_{\phi^{-1}(4)}^{\phi}$).

\begin{prop}\label{prop:cheap}
    Let $n=2^k.$  Partition the possible edge-lengths $[d]$ into sets $S_1, S_2, ..., S_{k}$ where $$S_i=\{1\leq j\leq d: \gcd(n, j)=2^{k-i}\}, i=1, ..., k.$$  Let $i^*$ be the cheapest edge-length in $S_i$.  A BLG path will use edges of length $i^*$ if and only if $$c_{i^*}< c_j \text{ for all } j \in \left( S_{i+1}\cup S_{i+2} \cup \cdots \cup S_{k}\right).$$ 
\end{prop}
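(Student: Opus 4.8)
The plan is to translate the phrase ``the BLG path uses edges of length $i^*$'' into a statement about where the $g$-sequence decreases, and then to decode that statement in terms of the partition $S_1, \dots, S_k$. Set $p := \phi^{-1}(i^*)$, the position of $i^*$ in the cost ordering. By Proposition \ref{prop:HP} a minimum-cost Hamiltonian path uses exactly $g_{p-1}^{\phi}-g_p^{\phi}$ edges of length $i^*$, so the path uses length $i^*$ if and only if the $g$-sequence strictly decreases at step $p$, that is $g_p^{\phi}<g_{p-1}^{\phi}$. (By Proposition \ref{prop:lim}, $i^*$ is the only length in $S_i$ that can ever be used, so ``using a length from $S_i$'' and ``using $i^*$'' coincide, and it suffices to analyze $i^*$.)

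The second step is a single gcd computation. By Claim \ref{cm:small}, $g_{p-1}^{\phi}=\gcd(n,\phi(1),\dots,\phi(p-1))$ is a power of $2$; write $g_{p-1}^{\phi}=2^m$. Since $i^*\in S_i$ means $\gcd(n,i^*)=2^{k-i}$, i.e.\ $i^*=2^{k-i}\cdot(\text{odd})$, we get $g_p^{\phi}=\gcd(i^*,2^m)=2^{\min(m,\,k-i)}$. Hence $g_p^{\phi}<g_{p-1}^{\phi}$ holds exactly when $m>k-i$, equivalently when $g_{p-1}^{\phi}\ge 2^{k-i+1}$.

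The third step is to unwind this divisibility condition. The inequality $g_{p-1}^{\phi}\ge 2^{k-i+1}$ says that $2^{k-i+1}$ divides $\gcd(n,\phi(1),\dots,\phi(p-1))$, hence divides every edge-length cheaper than $i^*$; since $\{\phi(1),\dots,\phi(p-1)\}$ is precisely the set of lengths $j$ with $c_j<c_{i^*}$, this is equivalent to every such $j$ satisfying $\gcd(n,j)\ge 2^{k-i+1}$, i.e.\ $j\in S_1\cup\cdots\cup S_{i-1}$. Equivalently: no length cheaper than $i^*$ lies in $S_i\cup S_{i+1}\cup\cdots\cup S_k$. Because $i^*$ is by definition the cheapest length in $S_i$, the $S_i$ part of this condition is automatic, so it reduces to $c_{i^*}<c_j$ for all $j\in S_{i+1}\cup\cdots\cup S_k$ (strictness coming from the assumption that all costs are distinct), which is exactly the claimed criterion.

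I expect the bulk of the work to be the bookkeeping in the gcd computation and in this last step — in particular being careful that $g_{p-1}^{\phi}$ is computed from the lengths \emph{strictly} cheaper than $i^*$, and checking the boundary cases ($i=1$, where the criterion says $i^*$ is globally cheapest, and $i=k$, where the union is empty and the criterion is vacuous because $i^*$ is already the cheapest odd length). None of these steps is deep; the only real idea is the reduction to ``the $g$-sequence decreases at $p$,'' after which everything is forced by the $2$-adic structure guaranteed by Claim \ref{cm:small}.
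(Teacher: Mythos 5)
Your proof is correct and takes essentially the same route as the paper's: both reduce ``the BLG path uses length $i^*$'' to ``the $g$-sequence strictly drops at position $\phi^{-1}(i^*)$'' (via Proposition~\ref{prop:HP}) and then exploit the $2$-adic structure from Claim~\ref{cm:small} to translate that drop into the cost condition. The only difference is organizational --- the paper argues the two directions of the equivalence as separate cases, while you run a single chain of equivalences through the identity $g_{\phi^{-1}(i^*)}^{\phi}=2^{\min(m,\,k-i)}$ --- but the underlying computation is the same.
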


\begin{proof}
    The proof of this proposition proceeds relatively directly.   To simplify notation, we again suppress the superscript $\phi$ on $g$.  
    
    Suppose there is some edge-length $j^*\in S_j$, with $j\geq i+1$, such that $c_{j^*}<c_{i^*}$.  Then $\phi^{-1}(j^*) < \phi^{-1}(i^*)$.  The main idea is that, since $j^*$ appears before $i^*$ in the cost permutation and has fewer factors of $2$, the term $g_{\phi^{-1}(i^*)-1}$ will already be too small for the edge-length $i^*$ to effect the $g$-sequence.  More formally, $$g_{\phi^{-1}(j^*)} = \gcd(\phi(\phi^{-1}(j^*)),g_{\phi^{-1}(j^*)-1}) =\gcd(j^*,g_{\phi^{-1}(j^*)-1})  \leq 2^{k-j},$$ since $j^*\in S_j$ implies that $j^* = k_j \cdot 2^{k-j}$ for some odd $k_j$ and, by Claim \ref{cm:small}, all elements of the $g$ sequence are powers of 2.    Thus $$g_{\phi^{-1}(j^*)}  \leq 2^{k-j} < 2^{k-i},$$ since $j\geq i+1.$    
    Since $\phi^{-1}(j^*) < \phi^{-1}(i^*)$, an analogous argument to the proof of Proposition \ref{prop:lim} shows that $g_{\phi^{-1}(i^*)}= g_{\phi^{-1}(i^*)-1}$ and thus the BLG path will not use any edges of length $i^*$. 

    In contrast, suppose that no such edge $j^*$ exists.  Then $\phi(1), \phi(2), ..., \phi(\phi^{-1}(i^*)-1)$ are all in $S_1 \cup S_2 \cup \cdots \cup S_{i-1}$ and so divisible by $2^{k-i+1}$.  Hence $g_{\phi^{-1}(i^*)-1}\geq 2^{k-i+1}$, but $$g_{\phi^{-1}(i^*)} = \gcd(\phi(\phi^{-1}(i^*)),g_{\phi^{-1}(i^*)-1}) =\gcd(i^*,g_{\phi^{-1}(i^*)-1}) =2^{k-i}.$$  Thus the BLG path will use  $$(g_{\phi^{-1}(i^*)-1}-g_{\phi^{-1}(i^*)}) \geq 2^{k-i+1}-2^{k-i} \geq 2^{k-i} \geq 1$$ edges of length $i^*.$  \hfill
\end{proof}
As a brief remark, note that we must use exactly one edge-length from $S_{k}=\{1, 3, 5, ..., d-1\}$, whichever is cheapest; once we use this edge-length, all remaining terms in the $g$ sequence will be 1 (and the $g$ sequence will not reach 1 until reaching that edge-length).

Proposition \ref{prop:cheap} gives rise to a combinatorial approach for counting BLG paths.  Consider, for example, the case when $n=16:$
\begin{enumerate}
    \item We can either use an edge-length from $S_1=\{8\}$ or skip $S_1$ (and use edges of length 8 if and only if they are the cheapest edge).
    \item We can either use an edge-length from $S_2=\{4\}$ or skip $S_2$ (and use edges of length 4 if and only if they are cheaper than edges of lengths 1, 2, 3, 5, 6, and 7).
     \item We can either use an edge-length from $S_3=\{2, 6\}$ or skip $S_3$ (and we can only use whichever of those two edge-lengths are cheaper, and we use that edge if and only if it is cheaper than edges of lengths 1, 3, 5, 7).
     \item We use exactly one edge-length from $S_4=\{1, 3, 5, 7\}$, whichever is cheapest.
\end{enumerate}

We can succinctly describe this decision process as an \emph{encoding sequence}: a sequence $(s_1, ..., s_k)$ that proceeds from $S_1$ to $S_{k}$, where $s_i$ is either an element of $S_i$ or indicates that we skip $S_i$ (with the restriction that we cannot skip $S_k$).  More formally:
\begin{defn}
Let $n=2^k$ for $k\geq 3$.  An {\bf encoding sequence} is a sequence $(s_1, ..., s_k)$ where, for $i=1, ..., k-1$, $$s_i\in \{1\leq j\leq d: \gcd(n, j)=2^{k-i}\} \cup \{x\}$$ and $$s_k \in \{1\leq j\leq d: \gcd(n, d)=1\}.$$
\end{defn}
Setting $s_i=x$ in the encoding sequence means that we ``skip'' $S_i$ and there is no weight on any edge-length in $S_i$. Otherwise $s_i$ is an edge-length that for which the BLG path will use a strictly-positive number of edges (i.e. in the corresponding edge-length vector $t$, we have that $t_{s_i}>0$); by Proposition \ref{prop:lim}, $s_i$ is the only length in $S_i$ that will be used.

Table \ref{tab:BLG8} explicitly shows the edge-lengths and encoding sequences representing all BLG paths on $n=8$ vertices.  

\begin{table}[t!]
\begin{center}
  \caption{The 8 possible BLG paths on $n=8$ vertices.  The first row handles the case where the cheapest edge-length is 1, the second row handles the case where the cheapest edge-length is 3, the third and fourth rows handle the case where it is 2, and the last four rows handle the case where it is 4.  Extended edge-length vectors correspond to Hamiltonian cycles.} \label{tab:BLG8}

    \begin{tabular}{|c|c|c|c|} \hline
         Encoding Sequence & Edge-Length Vector  & Condition & Extended Edge-Length Vector \\ \hline
         $(x, x, 1)$ & $(7, 0, 0, 0)$ & $c_1 < c_2, c_3, c_4$ & $(8, 0, 0, 0)$  \\ \hline
         $(x, x, 3)$ & $(0, 0, 7, 0)$ & $c_3 < c_1, c_2, c_4$ & $(0, 0, 8, 0)$ \\ \hline
         $(x, 2, 1)$ & $(1, 6, 0, 0)$ & $c_2 < c_1 < c_3, c_2 < c_4$  & $(2, 6, 0, 0)$\\ \hline     
         $(x, 2, 3)$ & $(0, 6, 1, 0)$ & $c_2 < c_3 < c_1, c_2 <  c_4$  & $(0, 6, 2, 0)$\\ \hline   
         $(4, x, 1)$ & $(3, 0, 0, 4)$ & $c_4<c_1 < c_2, c_3$ & $(4, 0, 0, 4)$ \\ \hline     
         $(4, x, 3)$ & $(0, 0, 3, 4)$ & $c_4<c_3 < c_1, c_2$  & $(0, 0, 4, 4)$ \\ \hline   
         $(4, 2, 1)$ & $(1, 2, 0, 4)$ & $c_4<c_2<c_1<c_3$  & $(2, 2, 0, 4)$\\ \hline     
         $(4, 2, 3)$ & $(0, 2, 1, 4)$ & $c_4<c_2<c_3<c_1$ & $(0, 2, 2, 4)$ \\ \hline  
    \end{tabular}   
\end{center}
\end{table}

   Any BLG path can be represented via an encoding sequence: Trace through the corresponding cost permutation $\phi$ computing the number of edges of length $\phi(1)$ used ($g_0^{\phi}-g_1^{\phi})$, then the number of edges of length $\phi(2)$ used ($g_1^{\phi}-g_2^{\phi})$, and so on.  Whenever reaching an edge-length that is used (i.e., that causes the $g$-sequence to drop), log that edge length.  By Propositions \ref{prop:lim} and \ref{prop:cheap}, there will be at most one edge-length from each $S_i$, and the edge-lengths used will come from successive $S_i$.

We also observe that the number of edges of a given length used in a BLG path can also be uniquely determined by the encoding sequence: Suppose $s_i\neq x$, so that edges of length $s_i$ are used in the BLG path (and $s_i$ is the cheapest edge-length in $S_i$).  We can compute $t_{s_i}$, the number of edges of length $s_i$ used, based on two cases:
\begin{itemize}
    \item If  $s_j=x$ for all $j<i$ (i.e. if $s_i$ is the cheapest edge-length), then $g_{\phi^{-1}(s_i))-1}=n$ and $g_{\phi^{-1}(s_i))}=2^{k-i}$ .  Hence we will use $g_{\phi^{-1}(s_i))-1}-g_{\phi^{-1}(s_i))}=n-2^{k-i}$ edges of length $s_i$.
    \item Otherwise let $j= \max \{1\leq k < i: s_k \neq x\}$.  Then $g_{\phi^{-1}(s_i))-1}=2^{k-j}$ and $g_{\phi^{-1}(s_i))}=2^{k-i}$, so we use $2^{k-j}-2^{k-i}$ edges of length $s_i$.
\end{itemize}
For any $j\not\in s$, $t_j=0$.  

Since the edge-length vector is uniquely determined by the encoding sequence $s$, if two BLG paths use the same set of edge-lengths, they must be the same BLG path (i.e., have the same edge-length vector):
\begin{cm}\label{cm:unique}
    Suppose that $P$ and $H$  are BLG paths on $n=2^k$ vertices, with respective edge-length vectors $(p_1, ..., p_d)$ and $(h_1, ..., h_d)$.  If $\{i: p_i>0\} = \{i: h_i>0\}$ then $(p_1, ..., p_d)=(h_1, ..., h_d)$
\end{cm}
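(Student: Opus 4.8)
The plan is to prove the claim by showing that the support of a BLG path's edge-length vector already determines its \emph{encoding sequence}, and that the encoding sequence in turn determines the entire edge-length vector. Chaining these two implications immediately yields that equal supports force equal vectors.

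First I would recall that the sets $S_1, \dots, S_k$ partition the edge-lengths $[d]$: every length $j$ lies in exactly one $S_i$, namely the one with $\gcd(n,j) = 2^{k-i}$. Consequently, for any edge-length vector $(p_1, \dots, p_d)$, its support $\{i : p_i > 0\}$ is split cleanly according to which $S_i$ each of its elements belongs to. Next, by Proposition \ref{prop:lim}, a BLG path uses at most one edge-length from each $S_i$, so the support of $P$ meets each $S_i$ in at most one element. This lets me reconstruct the encoding sequence $(s_1, \dots, s_k)$ of $P$ directly from its support alone: take $s_i$ to be the unique element of $\{i : p_i > 0\} \cap S_i$ if that set is nonempty, and set $s_i = x$ otherwise. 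Since this reconstruction references nothing but the support, two BLG paths with the same support necessarily have the same encoding sequence.

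Finally I would invoke the explicit recovery formula derived immediately before the claim, which expresses each coordinate of the edge-length vector in terms of the encoding sequence: for $s_i \neq x$, the count $t_{s_i}$ equals $n - 2^{k-i}$ when all earlier entries are $x$, and equals $2^{k-j} - 2^{k-i}$ (with $j$ the largest index below $i$ having $s_j \neq x$) otherwise, while $t_j = 0$ for every $j$ not appearing in the encoding sequence. Because this formula reconstructs the full vector from the encoding sequence and nothing else, equal encoding sequences produce equal vectors. Applying this to $P$ and $H$ gives $(p_1, \dots, p_d) = (h_1, \dots, h_d)$, as desired.

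There is no substantial obstacle here; the argument is essentially bookkeeping resting on Proposition \ref{prop:lim} and the recovery formula already in hand. The one point meriting a moment's care is confirming that the support, viewed merely as a subset of $[d]$, encodes exactly as much information as the encoding sequence. This hinges on the $S_i$ forming a partition of $[d]$ together with the one-length-per-$S_i$ property of BLG paths, both of which are established above, so the bijection between supports and encoding sequences (for BLG paths) goes through without complication.
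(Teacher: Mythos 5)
Your proposal is correct and takes essentially the same route as the paper: the paper also establishes Claim \ref{cm:unique} by noting that the support determines the encoding sequence (one length per $S_i$, via Proposition \ref{prop:lim}) and that the encoding sequence determines every coordinate of the edge-length vector through the recovery formula stated just before the claim. You merely make explicit the bookkeeping (the $S_i$ partition of $[d]$ and the reconstruction of $s_i$ from the support) that the paper leaves implicit.
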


We can now prove that encoding sequences exactly capture BLG paths:

\begin{prop}\label{prop:bij}
  When $n=2^k$ is a power of $2$, BLG paths are in bijection with encoding sequences.
\end{prop}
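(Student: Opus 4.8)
The plan is to exhibit two explicit maps and show they are mutually inverse. Let $F$ send a BLG path $P$ with edge-length vector $(t_1,\dots,t_d)$ to the sequence $(s_1,\dots,s_k)$ defined by $s_i=j$ if $P$ uses a positive number of edges of some length $j\in S_i$, and $s_i=x$ otherwise. This is well-defined and genuinely lands among the encoding sequences: Proposition \ref{prop:lim} guarantees that at most one length from each $S_i$ is used, so $s_i$ is unambiguous, while the remark following Proposition \ref{prop:cheap} guarantees that exactly one length of $S_k$ is always used, so $s_k\neq x$ as required.

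For the reverse direction, I would define $G$ on an encoding sequence $(s_1,\dots,s_k)$ by exhibiting a cost permutation that realizes it. Let $i_1<i_2<\cdots<i_m$ be the indices with $s_i\neq x$ (note $i_m=k$), and choose $\phi$ so that $\phi(t)=s_{i_t}$ for $1\le t\le m$ and $\phi(m+1),\dots,\phi(d)$ list the remaining lengths in any order; setting $c_{\phi(t)}=t$ makes the costs strictly increasing as the standing convention demands. I then let $G(s)$ be the BLG path produced by this $\phi$ via Proposition \ref{prop:HP}. The substance is to verify that the set of lengths used by $G(s)$ is exactly $\{s_i:s_i\neq x\}$. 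For each used $s_{i_\ell}$, every length of $S_{i_\ell+1}\cup\cdots\cup S_k$ appears later than $s_{i_\ell}$ in $\phi$ (the other used lengths $s_{i_{\ell+1}},\dots,s_{i_m}$ lie in later $S$-sets and were placed after $s_{i_\ell}$, while every unused length sits in the tail), so Proposition \ref{prop:cheap} certifies that $s_{i_\ell}$ is used. For each skipped set $S_i$ with $s_i=x$, its cheapest length $i^*$ is blocked because $s_k\in S_k\subseteq S_{i+1}\cup\cdots\cup S_k$ is a used length placed before all unused lengths, hence cheaper than $i^*$; Proposition \ref{prop:cheap} then forces $i^*$, and so by Proposition \ref{prop:lim} all of $S_i$, to be unused.

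Finally I would check the two maps invert one another. The composition $F\circ G$ is the identity on encoding sequences: by the previous paragraph $G(s)$ uses exactly the lengths $\{s_i:s_i\neq x\}$, and since each such $s_i$ is the unique, hence cheapest, used length of $S_i$, extracting $F(G(s))$ returns $s_i$ where $s_i\neq x$ and $x$ otherwise. For $G\circ F$, observe that $G(F(P))$ and $P$ are both BLG paths whose set of used edge-lengths equals $\{i:t_i>0\}$; Claim \ref{cm:unique} then forces their edge-length vectors to coincide, so $G(F(P))=P$. Hence $F$ and $G$ are inverse bijections.

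The main obstacle is the realizability step inside $G$: showing that an \emph{arbitrary} pattern of skips is producible by some valid cost permutation. The crux is the asymmetry of the condition in Proposition \ref{prop:cheap} --- a skipped set needs a cheaper length in some \emph{later} set --- and this is supplied uniformly by the fact that $S_k$ is never skipped and sits below every other $S_i$. Once this is in place, Claim \ref{cm:unique} handles the remaining bookkeeping, and no explicit edge-count computation is needed.
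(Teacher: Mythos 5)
Your proposal is correct and follows essentially the same route as the paper: the forward map records which length (if any) of each $S_i$ the path uses, and surjectivity/invertibility is established by building a cost permutation that places the non-skipped lengths first, in the order given by the encoding sequence, with all remaining lengths more expensive. The only difference is one of thoroughness --- you verify via Propositions \ref{prop:lim} and \ref{prop:cheap} that the constructed instance really does reproduce the given encoding sequence, a step the paper's proof asserts without detail --- so your argument is a slightly more careful rendering of the same proof.
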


\begin{proof}
We have previously described how to represent any BLG path as an encoding sequence.  This map is a bijection.  First, suppose $P_1$ and $P_2$ are distinct BLG paths (i.e., they have different edge-length vectors).  By Claim \ref{cm:unique},  there must be some edge-length used in one but not the other.  Without loss of generality, suppose that edge length $i$ is used in $P_1$ but not $P_2$.  Then $i$ will appear in the encoding sequence of $P_1$ but not $P_2$, so that they have distinct encoding sequences.  
   
   Similarly, this map is also surjective.  Consider an encoding sequence  $(s_1, ..., s_k)$.  One can construct a circulant instance whose corresponding BLG path maps to that encoding sequence as follows: for $1\leq i\leq k,$ if $s_i\neq x$, set $c_{s_i}=i$.   Then set all other edge costs to be larger than $k$ (in any order).  Then $\phi(1)$ will be the first non-$x$ (``non-skipped'') element in the encoding sequence, $\phi(2)$ will be the second non-$x$ element in the encoding sequence, and so on.  Thus the BLG path corresponding to $\phi$ will exactly have encoding sequence $(s_1, ..., s_k)$.  \hfill
\end{proof}

\subsection{Counting BLG Paths}

Viewing BLG paths as encoding sequences allows us to briskly count them, obtaining our lower bound on A352568 and A030077.  For instance, we previously saw that, when $n=16$,  $S_1=\{8\},  S_2=\{4\}, S_3=\{2, 6\},$ and $S_4=\{1, 3, 5, 7\}.$  An encoding sequence $(s_1, s_2, s_3, s_4)$ will have $s_i\in (S_i \cup \{x\})$ for $i=1, 2, 3$ and $s_4\in S_4$.  Hence there are $$|S_1 \cup \{x\}| \cdot |S_2 \cup \{x\}|  \cdot |S_3 \cup \{x\}|  \cdot |S_4| = 2\cdot 2\cdot 3 \cdot 4 = 48$$ encoding sequences (and thus BLG paths) on $16$ vertices.  More generally:

\begin{prop}\label{prop:BLGPaths}
Let $n=2^k$ for $k\in \Z, k\geq 4$.  There are $2^k \prod_{i=1}^{k-3}(2^i+1)$ BLG paths.
\end{prop}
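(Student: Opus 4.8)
The plan is to invoke the bijection of Proposition \ref{prop:bij} to reduce the count of BLG paths to a count of encoding sequences, and then to count those coordinate-by-coordinate. By Proposition \ref{prop:bij}, BLG paths on $n=2^k$ vertices are in bijection with encoding sequences $(s_1, \dots, s_k)$, so it suffices to enumerate the latter. Since the definition of an encoding sequence imposes one independent constraint per coordinate---namely $s_i \in S_i \cup \{x\}$ for $1 \le i \le k-1$ and $s_k \in S_k$---the number of encoding sequences factors as
\[
\left(\prod_{i=1}^{k-1}(|S_i| + 1)\right)\cdot |S_k|.
\]
Everything then reduces to determining the cardinalities $|S_i|$.

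Next I would compute $|S_i|$ explicitly. An element $j \in S_i$ satisfies $\gcd(2^k, j) = 2^{k-i}$, which forces $j = 2^{k-i}m$ for an odd integer $m$; the constraint $1 \le j \le d = 2^{k-1}$ then becomes $1 \le m \le 2^{i-1}$ with $m$ odd. Counting the odd integers in $[1, 2^{i-1}]$ gives $|S_1| = 1$ and $|S_i| = 2^{i-2}$ for $2 \le i \le k$. (As a consistency check, this recovers $S_k$ as the set of odd lengths at most $d$, of which there are indeed $2^{k-2}$.)

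Finally I would substitute and simplify. Using $|S_1|+1 = 2$, $|S_i| + 1 = 2^{i-2}+1$ for $2 \le i \le k-1$, and $|S_k| = 2^{k-2}$, the product becomes
\[
2 \cdot \left(\prod_{i=2}^{k-1}(2^{i-2}+1)\right) \cdot 2^{k-2}.
\]
Reindexing the middle product by $j = i-2$ turns it into $\prod_{j=0}^{k-3}(2^j+1) = 2\prod_{j=1}^{k-3}(2^j+1)$, and collecting the powers of two via $2 \cdot 2 \cdot 2^{k-2} = 2^k$ yields exactly $2^k \prod_{i=1}^{k-3}(2^i+1)$, as claimed.

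Since the whole argument is a direct enumeration, there is no genuine obstacle; the only thing to watch is the bookkeeping---isolating the exceptional case $|S_1| = 1$, where the generic formula $2^{i-2}$ would give a non-integer, and correctly handling the index shift that pulls a stray factor $(2^0+1)=2$ out of the reindexed product. The hypothesis $k \ge 4$ merely guarantees that the product $\prod_{i=1}^{k-3}$ is nonempty and that the sets $S_i$ behave as described.
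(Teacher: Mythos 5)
Your proof is correct and takes essentially the same route as the paper's: both reduce to counting encoding sequences via the bijection of Proposition~\ref{prop:bij}, compute the cardinalities $|S_i|$ coordinate-by-coordinate ($|S_1|=|S_2|=1$, $|S_i|=2^{i-2}$ generically, $|S_k|=2^{k-2}$), and multiply. The only difference is bookkeeping---the paper isolates $i=1,2$ and $i=k$ as special cases before taking the product over $i=3,\dots,k-1$, whereas you fold $i=2$ into the generic formula and extract the stray factor $(2^0+1)=2$ after reindexing---and both yield $2^k\prod_{i=1}^{k-3}(2^i+1)$.
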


\begin{proof}
Proposition \ref{prop:bij} shows that encoding sequences are in bijection with BLG paths.  As such, it suffices to count encoding sequences.  We thus count the number of options for each $s_i$.  
\begin{itemize}
    \item For $s_1\in (S_1\cup \{x\})$, $$\{1\leq j\leq d: \gcd(n, j)=2^{k-1}\} =\{1\leq j\leq d: gcd(n, j)=n/2\}  =\{n/2\},$$ so there are two options $(n/2$ or $x$).
    \item For $s_2 \in (S_2\cup\{x\})$, $$\{1\leq j\leq d: \gcd(n, j)=2^{k-2}\} =\{1\leq j\leq d: gcd(n, j)=n/4\}  = \{n/4\},$$ so there are two options ($n/4$ or $x$).
    \item For $s_k\in S_k,$ (which cannot be skipped), $$|1\leq j\leq d: \gcd(n, j)=1\}| = |\{1\leq j\leq d: j \text{ odd}\}| = n/4,$$ so there are $n/4=2^{k-2}$ options
    \item For all other $i=3, ..., k-1$, $s_i\in (S_i \cup \{x\})$, so we consider $\{1\leq j\leq d: \gcd(n, j)=2^{k-i}\}.$  Note that these are exactly the $j\in [d]$ that are divisible by $2^{k-i}$ but not $2^{k-i+1}$.  That is, $$\{1\leq j\leq d: \gcd(n, j)=2^{k-i}\} = \{z\cdot 2^{k-i}: z \text{ odd}, z\cdot 2^{k-i}\leq 2^{k-1}\} = \{z\cdot 2^{k-i}: z \text{ odd}, z\leq 2^{i-1}\}.$$  There are $2^{i-2}$ odd values of $z$ between 1 and $2^{i-1}.$  Including $x$, there are thus $2^{i-2}+1$ choices for $s_i,$ including $x$.
\end{itemize}

Taken together (and treating the options for $s_1, s_2,$ and $s_k$ separately), the number of encoding sequences is:

$$|S_1| \cdot |S_2| \cdot |S_k| \cdot \prod_{i=3}^{k-1} |S_i| = 2\cdot 2\cdot 2^{k-2} \cdot \prod_{i=3}^{k-1} \left(2^{i-2}+1\right) = 2^k \prod_{i=1}^{k-3}  \left(2^{i}+1\right),$$ as claimed. \hfill\end{proof}

Finally, to succinctly show that our bound is super-polynomial, we lower-bound this product:

\begin{prop}
 $$2^k \prod_{i=1}^{k-3}(2^i+1)>2^{\frac{k^2-3k}{2}}.$$
 \end{prop}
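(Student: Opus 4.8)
The plan is to bound the product $\prod_{i=1}^{k-3}(2^i+1)$ from below by dropping the ``$+1$'' in each factor, which immediately gives a clean power of $2$ and almost the desired bound; the small extra factor then needs to be accounted for. Concretely, the first step is to observe that $2^i+1 > 2^i$ for every $i$, so that
\begin{equation*}
\prod_{i=1}^{k-3}(2^i+1) > \prod_{i=1}^{k-3} 2^i = 2^{\sum_{i=1}^{k-3} i} = 2^{\frac{(k-3)(k-2)}{2}}.
\end{equation*}
Multiplying by the leading $2^k$ then yields $2^k \prod_{i=1}^{k-3}(2^i+1) > 2^{k + \frac{(k-3)(k-2)}{2}}$.

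The second step is the routine algebraic simplification of the exponent. I would expand $k + \frac{(k-3)(k-2)}{2} = \frac{2k + (k^2 - 5k + 6)}{2} = \frac{k^2 - 3k + 6}{2}$, which is strictly larger than $\frac{k^2 - 3k}{2}$ since $6 > 0$. This already delivers the claimed inequality $2^k \prod_{i=1}^{k-3}(2^i+1) > 2^{\frac{k^2-3k}{2}}$ with room to spare, so no delicate estimate is needed.

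I expect there to be essentially no obstacle here: the inequality is strict and loose, so the crude bound $2^i + 1 > 2^i$ suffices and there is no need to retain the $+1$ terms or invoke any sharper product estimate. The only point requiring minor care is the arithmetic of the triangular-number sum $\sum_{i=1}^{k-3} i = \frac{(k-3)(k-2)}{2}$ and the subsequent exponent comparison; I would double-check the edge behavior for the smallest admissible value $k=4$ (where the product is empty or has a single term) to confirm the bound holds there as well. Since the statement is used only to conclude super-polynomial growth, the extra slack (the surplus of $6$ in the exponent's numerator) is harmless and in fact reassuring.
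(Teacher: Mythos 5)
Your proof is correct and takes essentially the same approach as the paper: both bound each factor by $2^i+1>2^i$, evaluate the triangular sum $\sum_{i=1}^{k-3} i = \frac{(k-3)(k-2)}{2}$, and then discard the surplus of $6$ in the exponent's numerator. The only cosmetic difference is that you fold the leading $2^k$ into the exponent before making the final comparison, whereas the paper drops the $+6$ first and then combines.
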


\begin{proof}
\begin{align*}
    2^k \prod_{i=1}^{k-3}(2^i+1) &> 2^k \prod_{i=1}^{k-3} 2^i \\
    &= 2^k \cdot 2^{\sum_{i=1}^{k-3} i} \\
    &= 2^k \cdot 2^{\frac{(k-3)(k-2)}{2}} \\
    &= 2^k \cdot 2^{\f{k^2-5k+6}{2}} \\
    &>  2^k  \cdot 2^{\f{k^2-5k}{2}} \\
        &=   2^{\f{k^2-3k}{2}}.
\end{align*}
\hfill
\end{proof}

Recall that OEIS sequence A352568 counts the number of combinations of edge lengths that can comprise a Hamiltonian path.  Since BLG paths are all Hamiltonian paths with distinct edge-length vectors, the above result immediately gives a lower bound on A352568 (whenever $n=2^k$) and shows that A352568 grows super-polynomially.  Recall also that OEIS sequence A030077 is similar, and counts the distinct total lengths of such paths when the vertices $1$ through $n$ are uniformly arranged around a circle (formally, ``Take n equally spaced points on circle, connect them by a path with n-1 line segments; sequence gives number of distinct path lengths'' \cite{Slo}).  McKay and Peters \cite{Mc22} proved that A352568 and A030077 match whenever $n$ is a power of 2, so we also have a lower bound on A030077.

\begin{cor}
    Let $n=2^k$.  Then the $n$th term in OEIS sequences A352568 and A030077 is at least  $$2^k \prod_{i=1}^{k-3}(2^i+1)>2^{\frac{k^2-3k}{2}},$$ and both sequences are super-polynomial. 
\end{cor}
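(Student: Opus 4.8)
The plan is to assemble the corollary directly from the counting and bounding results already in hand, so the work is almost entirely bookkeeping: translate ``number of BLG paths'' into a lower bound on A352568, and then transfer that bound to A030077. No new combinatorics is required; every ingredient was produced in Section \ref{sec:combo}.

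First I would observe that every BLG path is a genuine Hamiltonian path on $[n]$, and that by Proposition \ref{prop:bij} the BLG paths are in bijection with encoding sequences. Since the edge-length vector of a BLG path is completely determined by its encoding sequence (via the two-case formula for $t_{s_i}$ recorded just before Claim \ref{cm:unique}), and a BLG path is by definition identified with its edge-length vector, the bijection guarantees that there are as many \emph{distinct} edge-length vectors realized by BLG paths as there are encoding sequences. Because A352568 counts precisely the combinations of edge lengths that can comprise a Hamiltonian path---equivalently, the distinct admissible edge-length vectors---this yields the inequality $\text{A352568}(n) \geq \#\{\text{BLG paths}\}$.

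Second, I would plug in Proposition \ref{prop:BLGPaths}, which evaluates the number of BLG paths as $2^k \prod_{i=1}^{k-3}(2^i+1)$ for $n = 2^k$ with $k \geq 4$, and then quote the immediately preceding proposition, which already establishes $2^k \prod_{i=1}^{k-3}(2^i+1) > 2^{(k^2-3k)/2}$. Together these give both inequalities in the corollary for A352568. To pass to A030077, I would cite McKay and Peters \cite{Mc22}, who proved that A352568 and A030077 agree whenever $n$ is a power of $2$; the identical lower bound therefore applies verbatim to A030077.

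Finally, for the super-polynomiality claim I would rewrite the bound in terms of $n$ itself. Since $k = \log_2 n$, one has $2^{(k^2-3k)/2} = n^{(\log_2 n - 3)/2}$, whose exponent $(\log_2 n - 3)/2 \to \infty$; hence for every fixed degree $D$ the bound eventually exceeds $n^D$, so both sequences are super-polynomial. There is no genuine obstacle left, so the only points that demand care are (i) making the injectivity of the BLG-path-to-edge-length-vector correspondence explicit, so that we are counting distinct A352568 entries rather than merely paths, and (ii) confirming that ``combination of edge lengths comprising a Hamiltonian path'' in the definition of A352568 is literally ``distinct edge-length vector,'' so that the BLG count is a legitimate lower bound with no double-counting.
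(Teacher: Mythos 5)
Your proposal is correct and follows essentially the same route as the paper: BLG paths have pairwise-distinct edge-length vectors (so their count lower-bounds A352568), Proposition \ref{prop:BLGPaths} and the product bound supply the quantitative estimate, and the McKay--Peters equality transfers it to A030077. The only addition you make is spelling out the super-polynomiality via $2^{(k^2-3k)/2}=n^{(\log_2 n-3)/2}$, which the paper leaves implicit; this is a harmless and correct elaboration.
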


\section{Lower-Bounding the Vertices of $EL(n)$ for Powers of 2}\label{sec:lb}

Let $n=2^k$ be a power of 2.  So far, we have shown that there are at least $2^{\f{k^2-3k}{2}}$ combinations of edge lengths that correspond to Hamiltonian paths.  To prove Theorem \ref{mainthm}, however, we need edge-length vectors of Hamiltonian cycles that correspond to vertices of $EL(n)$.  In this section, we thus construct vertices of $EL(n)$ using edge-length vectors corresponding to the above encoding sequences.  To succinctly get a super-polynomial lower bound, we will argue that at least $\frac{1}{4}$ of the above encoding sequences give rise to a unique vertex of $EL(n)$, which completes the proof of Theorem \ref{mainthm}.  

Encoding sequences correspond to BLG paths.  To extend these paths to cycles, we sketch the proof of Proposition \ref{prop:HP} and how the BLG paths are constructed.   Recall that $\ell=\min\{i:1\leq i\leq d, g_i^{\phi}=1\}$ and $d=\lfloor n/2\rfloor.$

\HP*

 The idea is recursive: start by constructing a path visiting all vertices $\{v\in[n]: v\equiv_{g_1^{\phi}} 1\}.$  Then, for $j=2$ to $\ell$, use circulant symmetry to extend a path visiting all vertices in  $\{v\in[n]: v\equiv_{g_{j-1}^{\phi}} 1\}$ to  a path visiting all vertices in  $\{v\in[n]: v\equiv_{g_{j}^{\phi}} 1\}.$  
 
 While the idea of the proof is not that complicated, it is notationally involved and perhaps easiest to see through an example. Figure \ref{fig:BLGEx1} shows the process when $n=32$ and $\phi(1)=8, \phi(2)=10, \phi(3)=\phi(\ell)=7$, so that $g_1^{\phi}=8, g_2^{\phi}=2,$ and $g_3^{\phi}=1$.  The process starts with a path visiting all vertices in  $\{v \in [32]: v\equiv_8 1\} = \{1, 9, 17, 25\}.$  This path is immediate: we start at vertex 1 and follow edges of length $\phi(1)=8$ until reaching all vertices in $ \{1, 9, 17, 25\}$.  To extend this to a path on all vertices in $\{v \in [24]: v\equiv_{g_2^{\phi}} 1\}=\{1, 3, 5, ..., 31\}$, we first use circulant symmetry to ``copy and translate'' the preceding path by multiples of $\phi(2)=6$.  Then we add edges of length $\phi(2)$ edges to connect the translated paths and attain the path on $\{1, 3, 5, ..., 31\}$.  Finally, the process is repeated, copying and translating this path by $\phi(3)=7$ and then using a length-$\phi(3)$ edge to connect the translate and attain a path on $\{v\in [32]: v\equiv_1 1\}=[n]$, the full vertex set.

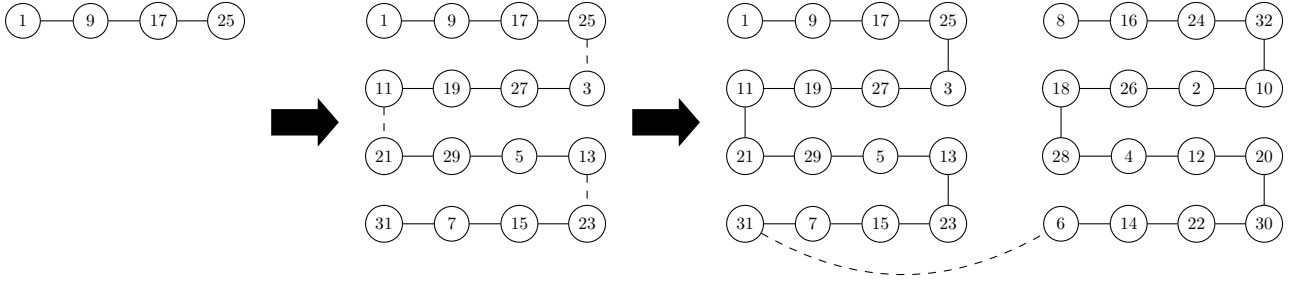
\begin{figure}
    \centering
 \begin{tikzpicture}[scale=0.6, transform shape]

        \tikzset{vertex/.style = {shape=circle,draw,minimum size=2em}}
        \vertex (a0) at (0, 0) {1};
        \vertex (a1) at (1.5, 0) {9};
        \vertex (a2) at (3, 0) {17};
                \vertex (a3) at (4.5, 0) {25};
        \draw (a0) to (a1) to (a2) to (a3);
        
        \vertex (b0) at (0, -1.5) {11};
        \vertex (b1) at (1.5, -1.5) {19};
        \vertex (b2) at (3, -1.5) {27};
                \vertex (b3) at (4.5, -1.5) {3};
        \draw (b0) to (b1) to (b2) to (b3);
        
        \vertex (c0) at (0, -3) {21};
        \vertex (c1) at (1.5, -3) {29};
        \vertex (c2) at (3, -3) {5};
                \vertex (c3) at (4.5, -3) {13};
        \draw (c0) to (c1) to (c2) to (c3);  
        
        \vertex (d0) at (0, -4.5) {31};
        \vertex (d1) at (1.5, -4.5) {7};
        \vertex (d2) at (3, -4.5) {15};
                \vertex (d3) at (4.5, -4.5) {23};
        \draw (d0) to (d1) to (d2) to (d3);  
        
        \draw[dashed] (a3) to (b3);
                \draw[dashed] (b0) to (c0);
                        \draw[dashed] (c3) to (d3);

        \vertex (e0) at (-8, 0) {1};
        \vertex (e1) at (-6.5, 0) {9};
        \vertex (e2) at (-5, 0) {17};
                        \vertex (e3) at (-3.5, 0) {25};
        \draw (e0) to (e1) to (e2) to (e3);

        \draw[-{Triangle[width=18pt,length=8pt]}, line width=10pt](-2.5,-2.25) -- (-1, -2.25);

\draw[-{Triangle[width=18pt,length=8pt]}, line width=10pt](5.5,-2.25) -- (7, -2.25);

        \vertex (a0b) at (8, 0) {1};
        \vertex (a1b) at (9.5, 0) {9};
        \vertex (a2b) at (11, 0) {17};
        \vertex (a3b) at (12.5, 0) {25};
        \draw (a0b) to (a1b) to (a2b) to (a3b);
        
        \vertex (b0b) at (8, -1.5) {11};
        \vertex (b1b) at (9.5, -1.5) {19};
        \vertex (b2b) at (11, -1.5) {27};
        \vertex (b3b) at (12.5, -1.5) {3};
        \draw (b0b) to (b1b) to (b2b) to (b3b);
        
        \vertex (c0b) at (8, -3) {21};
        \vertex (c1b) at (9.5, -3) {29};
        \vertex (c2b) at (11, -3) {5};
        \vertex (c3b) at (12.5, -3) {13};
        \draw (c0b) to (c1b) to (c2b) to (c3b);  
        
        \vertex (d0b) at (8, -4.5) {31};
        \vertex (d1b) at (9.5, -4.5) {7};
        \vertex (d2b) at (11, -4.5) {15};
        \vertex (d3b) at (12.5, -4.5) {23};
        \draw (d0b) to (d1b) to (d2b) to (d3b);  
        
        \draw (a3b) to (b3b);
        \draw (b0b) to (c0b);
        \draw(c3b) to (d3b);

 \vertex (a0c) at (15, 0) {8};
        \vertex (a1c) at (16.5, 0) {16};
        \vertex (a2c) at (18, 0) {24};
        \vertex (a3c) at (19.5, 0) {32};
        \draw (a0c) to (a1c) to (a2c) to (a3c);
        
        \vertex (b0c) at (15, -1.5) {18};
        \vertex (b1c) at (16.5, -1.5) {26};
        \vertex (b2c) at (18, -1.5) {2};
        \vertex (b3c) at (19.5, -1.5) {10};
        \draw (b0c) to (b1c) to (b2c) to (b3c);
        
        \vertex (c0c) at (15, -3) {28};
        \vertex (c1c) at (16.5, -3) {4};
        \vertex (c2c) at (18, -3) {12};
        \vertex (c3c) at (19.5, -3) {20};
        \draw (c0c) to (c1c) to (c2c) to (c3c);  
        
        \vertex (d0c) at (15, -4.5) {6};
        \vertex (d1c) at (16.5, -4.5) {14};
        \vertex (d2c) at (18, -4.5) {22};
        \vertex (d3c) at (19.5, -4.5) {30};
        \draw (d0c) to (d1c) to (d2c) to (d3c);  
        
        \draw (a3c) to (b3c);
        \draw (b0c) to (c0c);
        \draw(c3c) to (d3c);
                    
                        \draw[dashed, bend right] (d0b) to (d0c);

        \end{tikzpicture}
    \caption{Producing a BLG path on an input on $n=32$ vertices with $\phi(1)=8, \phi(2)=10, \phi(3)=\phi(\ell)=7$. A path is extend from  $\{v\in [32]: v\equiv_{8} 1\}$ to  $\{v\in [32]: v\equiv_{2} 1\}$ to to  $\{v\in [32]: v\equiv_{1} 1\}$.  Each step uses circulant symmetry to  ``copying and translate'' the previous path, and then connect translates.  At each step, edges connecting translates are dashed.}
    \label{fig:BLGEx1}
\end{figure}

\begin{proof}[Proof (Sketch)]
First, we show how to construct a Hamiltonian path with $(g_{i-1}^{\phi}-g_i^{\phi})$ edges of length $\phi(i)$, for $i=1, ..., \ell$, using circulant symmetry.  We begin by constructing a path visiting all vertices in $\{v\in[n]: v\equiv_{g_1^{\phi}} 1\}.$  Then, for $j=2$ to $\ell$, we use circulant symmetry to extend our path visiting all vertices in  $\{v\in[n]: v\equiv_{g_{j-1}^{\phi}} 1\}$ to  a path visiting all vertices in  $\{v\in[n]: v\equiv_{g_{j}^{\phi}} 1\}.$

In general, we begin with a path $\{v\in[n]: v\equiv_{g_1^{\phi}} 1\}$ attained by starting at vertex 1 and following edges of length-$\phi(1)$ until reaching the last vertex in that set (using $n/g_1^{\phi}-1$ edges; one more would create a cycle).  Then, for $j=2$ to $\ell$, suppose $v_1, v_2, ..., v_{n/g_{j-1}^{\phi}}$ is a path visiting all vertices in  $\{v\in[n]: v\equiv_{g_{j-1}^{\phi}} 1\}$.  We ``copy and translate'' this path to have the paths $$v_1+t\cdot \phi(j), v_2+t\cdot \phi(j), ..., v_{n/g_{j-1}^{\phi}}+t\cdot \phi(j), \hspace{5mm} t=0, 1, ..., \frac{g_{j-1}^{\phi}}{g_{j}^{\phi}}-1.$$  We merge these into a single path on  all vertices in $\{v\in[n]: v\equiv_{g_{j-1}^{\phi}} 1\}$ by adding the edges $\{v_{n/g_{j-1}^{\phi}}+t\cdot \phi(j), v_{n/g_{j-1}^{\phi}}+(t+1)\cdot \phi(j)\}$ for $0\leq t \leq \frac{g_{j-1}^{\phi}}{g_{j}^{\phi}}-2, t\equiv_2 0$ and $\{v_{1}+t\cdot \phi(j), v_{1}+(t+1)\cdot \phi(j)\}$ for $1\leq t \leq \frac{g_{j-1}^{\phi}}{g_{j}^{\phi}}-1, t\equiv_2 1$. 

To see that this is a minimum-cost Hamiltonian path, note that for $1\leq i\leq \ell,$  the graph  $C\langle\{\phi(1), \phi(2), ..., \phi(i)\}\rangle$ has $g_i^{\phi}$ components (in particular, the components $\{v\in[n]: v\equiv_{g_{i}^{\phi}} k\}$ for $k=1, ..., g_i^{\phi}.$)  Hence, at most $n-g_i^{\phi}$ edges can be used from the cheapest $i$ stripes without creating a cycle. Kruskal's algorithm will find a minimum-cost spanning tree using $n-g_1^{\phi}=g_0^{\phi}-g_1^{\phi}$ edges from the cheapest stripe, $g_1^{\phi}-g_2^{\phi}$ edges from the second cheapest stripe, and in general $g_{i-1}^{\phi}-g_i^{\phi}$ edges from the $i$th cheapest stripe.  All together, this spanning tree costs  $\sum_{i=1}^{\ell} (g_{i-1}^{\phi}-g_i^{\phi})c_{\phi(i)}.$  Since any Hamiltonian path  is itself a spanning tree,  any Hamiltonian path must cost at least this much; the constructed Hamiltonian path achieves this lower bound and is therefore optimal.   \hfill 
\end{proof}

We can now prove our main theorem about vertices of $EL(n)$, which we restate below:

\mainthm*

To prove this theorem, we will only consider encoding sequences where $s_{k-1}\neq x$. All of these paths have an analogous final step to what's shown in  Figure \ref{fig:BLGEx1}: merging together exactly two paths of length $n/2$, one of which starts at vertex $1$ and ends at some vertex $v$, and its translate starting at vertex $1+\phi(\ell)$ and ending at $v+\phi(\ell)$  (and specifically, $\phi(\ell)=s_{k}$).  The process described in our proof of  Proposition \ref{prop:HP}  adds the edge $\{v, v+\phi(\ell)\}$.  Adding the edge $\{1, 1+\phi(\ell)\}$ uses one additional length-$\phi(\ell)$ edge to extend this to a Hamiltonian cycle.  See, e.g., the Extended Edge-Length Vector column of Table \ref{tab:BLG8}.

\begin{proof}[Proof (of Theorem  \ref{mainthm})]
Proposition \ref{prop:BLGPaths} gives us $2^k \prod_{i=1}^{k-3}(2^i+1)$ BLG paths.  There are at least $2^{k-1} \prod_{i=1}^{k-3}(2^i+1)$ BLG paths whose encoding sequence has $s_{k-1}\neq x$: since $\{1\leq j \leq d: \gcd(n, j)=2\}$ includes 2, we have that $|S_{k-1}|\geq 2$.  Hence $$|S_{k-1}\backslash \{x\}| = |\{1\leq j \leq d: \gcd(n, j)=2\}| = |S_{k-1}|-1 \geq |S_{k-1}|/2.$$  To prove Theorem \ref{mainthm}, we need to show two things: First, that extending these to BLG paths to Hamiltonian cycles gives rise to at least $2^{k-2} \prod_{i=1}^{k-3}(2^i+1)$ distinct  edge-length vectors  of  Hamiltonian cycles.  Second, that each such cycle indeed corresponds to a vertex of $EL(n)$.

First, consider our $2^{k-1} \prod_{i=1}^{k-3}(2^i+1)$ BLG paths with $s_{k-1}\neq x$, which are extended to a Hamiltonian cycle by adding a single length-$s_k$ edge.  When we extend these to edge-length vectors of Hamiltonian Cycles, we claim that we get at least $2^{k-2} \prod_{i=1}^{k-3}(2^i+1)$ distinct edge-length vectors.  To do so, consider any BLG path with edge-length vector $p=(p_1, ..., p_d)$. Recall that Proposition \ref{prop:HP} indicates that $p_{\phi(i)}=g_{i-1}^{\phi}-g_i^{\phi}$ (for a corresponding cost permutation $\phi$).  Similarly, Claim \ref{cm:small} indicates that all elements of the $g$-sequence are (weakly decreasing) powers of 2.  Thus the only non-zero $p_i$ are those that arise are those that occur whenever the $g$-sequence decreases between two powers of 2, and they will all be distinct numbers of the form  the form $2^i-2^{j}$ for $i, j\in \N$ with $i > j \geq 0$.   Similarly, since $s_{k-1}\neq x$, $p_{s_k}=1$; this corresponds to when the $g$-sequence drops from 2 to 1 (specifically,  $g_{\phi^{-1}(s_k)}=1$ because $\gcd(s_k, n)=1$;  $g_{\phi^{-1}(s_k)-1}=2$, because $s_{k-1}\neq x$ means that $\gcd(s_{k-1}, n)=2$ and that edges of length $s_{k-1}$ are cheaper than any edges whose length is in $S_k$).  

When we extend $p$ to an edge-length vector of a Hamiltonian cycle, we add a single edge of length $\phi(\ell)=s_k$.  If $p'=(p'_1, ..., p'_d)$ is the extended edge-length vector of the Hamiltonian cycle, then $$p'_t=\begin{cases} p_t, & t\neq s_k \\ 2, & t = s_k \end{cases}.$$  Moreover, each $p'_t$ is either 0 or of the form $2^i-2^{j}$ for $i, j\in \N$ with $i > j \geq 1$. As before, all non-zero entries of $p'$ will be distinct, except possibly there will two 2's (this happens exactly when $s_{k-2}\neq x$, in which case $p'_{s_{k-1}}$ and $p'_{s_k}$ will be 2).

Thus there is only way for the edge-length vectors of two distinct BLG paths (arising from distinct encoding sequences) to be extended to the same edge length vector of a Hamiltonian cycle: if one path (with edge-length vector $p$) has $p_i=2$ and $p_j=1$, and another path (with edge-length vector $q$) has  $q_i=1$ and $q_j=2$, and $p_t=q_t$ for all other $t\in [d], t\neq i, t\neq j$.  Then, in theory, both could extend to the same cycle using 2 edges of length-$i$ and 2 edges of length-$j$ (and $p_t=q_t$ edges of all other lengths $t$).  It turns out that this can never happen, but it is more succinct to observe that this can only reduce the number of extended edge-length vectors (of Hamiltonian cycles) by a factor of 2.  Hence, we indeed have $2^{k-2} \prod_{i=1}^{k-3}(2^i+1)$ unique edge-length vectors of Hamiltonian cycles.

Finally, we must argue that each resulting extended edge-length vector is indeed a vertex of $EL(n)$. Consider a BLG path with corresponding encoding sequence $s=(s_1, ..., s_k)$.  We claim that the extended edge-length vector is the unique optimal solution to any circulant instance where $\phi(1)$ is the lowest-indexed $s_i$ that's not $x$ (i.e. the first $s_i\neq x$ as you read $s$ from left-to-right), $\phi(2)$ is the second-lowest-indexed $s_i$ that's not $x$, and so on (and all edge-length costs are distinct).

Optimality follows from a greedy-algorithm argument.  The original BLG path uses $(g_{i-1}^{\phi}-g_i^{\phi})$ edges of length $\phi(i)$, for $i=1, ..., \ell$, and the extended Hamiltonian cycle adds exactly one edge of length $\phi(\ell).$  For $1\leq i < \ell$, no Hamiltonian path (or cycle) can use more than $n-g_i^{\phi}$ total edges of lengths $\phi(1), \phi(2), ..., \phi(i)$, as the Circulant graph $C\langle\{\phi(1), ..., \phi(i)\}\rangle$ has $g_i^{\phi}$ components.  Our cycle thus uses as many edges of the cheapest possible edge-length $\phi(1)$ as possible, without forcing a subcycle; then as many combined edges of lengths $\phi(1)$ and $\phi(2)$ as possible (the two cheapest lengths) without creating a subcycle; then as many combined edges of lengths $\phi(1)$, $\phi(2)$, and $\phi(3)$ as possible without creating a subcycle; and so on until using as many combined edges of lengths $\phi(1)$, $\phi(2)$, ... $\phi(\ell)$ as possible without creating a subcycle.  Since all edge costs are distinct, any other Hamiltonian cycle will necessarily cost more.

\hfill
\end{proof}

\section{Conclusions}
Theorem  \ref{mainthm} establishes that the edge-length polytope is intimately linked to the factorization of $n$: there are few enough vertices that a brute-force ``check all vertices'' algorithm is formally efficient whenever $n$ is a prime or a prime-squared, but this is not the case when $n$ is a power of 2.  It would be particularly interesting if the number of vertices of $EL(n)$ was based on the number of factors of $n$.

\section*{Acknowledgments}

\vspace{-3mm}

\noindent The paper was supported by the National Science Foundation, Grant No. CCF-2153331. We thank the referee for careful and helpful feedback.
\vspace{-8mm}

\clearpage

\bibliography{bibliog} 
\bibliographystyle{abbrv}

\begin{appendix}

\section{Vertices of $EL(n)$ for Small $n$}\label{sec:app}
In this Appendix, we include computational data from Polymake giving the vertices  of $EL(n)$ for small $n$.

\subsection{$n=7$}
When $n=7$, there are 3 vertices: (7, 0, 0), (0, 7, 0), (0, 0, 7).

\subsection{$n=8$}
When $n=8$, there are 10 vertices: (8, 0, 0, 0), (0, 0, 8, 0), (6, 0, 0, 2), (2, 6, 0, 0), (0, 6, 2, 0), (0, 0, 6, 2), (1, 0, 3, 4),
(3, 0, 1, 4), (2, 2, 0, 4), (0, 2, 2, 4).

\subsection{$n=9$}
When $n=9$, there are 6 vertices: (9, 0, 0, 0), (0, 9, 0, 0), (0, 0, 0, 9), (3, 0, 6, 0), (0, 3, 6, 0), (0, 0, 6, 3)

\subsection{$n=10$}
When $n=10$, there are 18 vertices: (10, 0, 0, 0, 0), 
(0, 0, 10, 0, 0), 
(5, 0, 0, 0, 5),
(0, 0, 5, 0, 5),
(2, 8, 0, 0, 0),
(0, 8, 2, 0, 0),
(0, 8, 0, 0, 2),
(2, 0, 0, 8, 0),
(0, 0, 0, 8, 2),
(0, 0, 2, 8, 0),
(1, 0, 0, 4, 5),
(0, 4, 1, 0, 5),
(0, 5, 0, 1, 4),
(0, 1, 0, 5, 4),
(3, 2, 0, 0, 5),
(0, 0, 3, 2, 5),
(1, 3, 0, 1, 5),
(0, 1, 1, 3, 5).

\subsection{$n=11$}
When $n=11$, there are 5 vertices: (11, 0, 0, 0, 0), 
(0, 11, 0, 0, 0),
(0, 0, 11, 0, 0),
(0, 0, 0, 11, 0),
(0, 0, 0, 0, 11).

\subsection{$n=12$}
When $n=12$, there are 48 vertices: (12, 0, 0, 0, 0, 0),
(0, 0, 0, 0, 12, 0),
(10, 0 ,0, 0, 0, 2),
(0, 0, 0, 0, 10, 2),
(2, 10, 0, 0, 0, 0),
(0, 10, 2, 0, 0, 0),
(0, 10, 0, 0, 2, 0),
(3, 0, 9, 0, 0, 0),
(0, 0, 9, 0, 3, 0),
(4, 0, 0, 8, 0, 0),
(0, 0, 4, 8, 0, 0),
(0, 0, 0, 8, 4, 0),
(0, 4, 8, 0, 0, 0),
(0, 0, 8, 4, 0, 0),
(3, 0, 3, 0, 0, 6),
(0, 0, 3, 0, 3, 6),
(0, 3, 8, 0, 0, 1),
(0, 0, 8, 3, 0, 1),
(0, 3, 4, 0, 0, 5),
(0, 0, 4, 3, 0, 5),
(2, 0, 0, 4, 0, 6),
(0, 0, 2, 4, 0, 6),
(0, 0, 0, 4, 2, 6),
(2, 4, 0, 0, 0, 6),
(0, 4, 0, 0, 2, 6),
(0, 4, 2, 0, 0, 6),
(4, 0, 2, 0, 0, 6),
(0, 0, 2, 0, 4, 6),
(2, 0, 0, 8, 0, 2),
(0, 0, 2, 8, 0, 2),
(0, 0, 0, 8, 2, 2),
(2, 2, 0, 8, 0, 0),
(0, 2, 2, 8, 0, 0),
(0, 2, 0, 8, 2, 0),
(4, 1, 0, 1, 0, 6),
(0, 1, 0, 1, 4, 6),
(6, 1, 0, 0, 0, 5),
(1, 0, 0, 0, 5, 6),
(5, 0, 0, 0, 1, 6),
(0, 1, 0, 0, 6, 5),
(1, 2, 9, 0, 0, 0),
(0, 2, 9, 0, 1, 0),
(1, 0, 9, 2, 0, 0),
(0, 0, 9, 2, 1, 0),
(1, 0, 3, 2, 0, 6),
(1, 2, 3, 0, 0, 6),
(0, 2, 3, 0, 1, 6),
(0, 0, 3, 2, 1, 6).

\subsection{$n=13$}
When $n=13$, there are 6 vertices: (13, 0, 0, 0, 0, 0),
(0, 13, 0, 0, 0, 0),
(0, 0, 13, 0, 0, 0),
(0, 0, 0, 13, 0, 0),
(0, 0, 0, 0, 13, 0),
(0, 0, 0, 0, 0, 13).

\subsection{$n=14$}
When $n=14$, there are 51 vertices:
(14,0,0,0,0,0,0),
(2,12,0,0,0,0,0),
(1,8,0,0,0,0,5),
(2,0,0,12,0,0,0),
(2,0,0,0,0,12,0),
(1,0,0,0,0,6,7),
(2,0,0,4,1,0,7),
(1,0,0,5,0,1,7),
(5,0,0,2,0,0,7),
(1,2,0,4,0,0,7),
(1,4,2,0,0,0,7),
(1,5,0,0,0,1,7),
(1,4,0,2,0,0,7),
(3,4,0,0,0,0,7),
(7,0,0,0,0,0,7),
(0,12,2,0,0,0,0),
(0,12,0,0,2,0,0),
(0,12,0,0,0,0,2),
(0,2,0,6,0,0,6),
(0,6,0,0,0,2,6),
(0,7,0,1,0,0,6),
(0,1,0,0,0,7,6),
(0,1,0,0,1,5,7),
(0,2,1,0,0,4,7),
(0,4,1,0,0,2,7),
(0,6,0,0,1,0,7),
(0,1,0,5,1,0,7),
(0,5,1,1,0,0,7),
(0,2,5,0,0,0,7),
(0,0,14,0,0,0,0),
(0,0,2,12,0,0,0),
(0,0,2,0,0,12,0),
(0,0,1,0,0,8,5),
(0,0,1,0,2,4,7),
(0,0,1,1,0,5,7),
(0,0,3,0,0,4,7),
(0,0,1,6,0,0,7),
(0,0,7,0,0,0,7),
(0,0,0,8,1,0,5),
(0,0,0,7,0,1,6),
(0,0,0,2,0,6,6),
(0,0,0,12,0,0,2),
(0,0,0,12,2,0,0),
(0,0,0,2,1,4,7),
(0,0,0,4,1,2,7),
(0,0,0,4,3,0,7),
(0,0,0,0,14,0,0),
(0,0,0,0,2,12,0),
(0,0,0,0,5,2,7),
(0,0,0,0,7,0,7),
(0,0,0,0,0,12,2).

\end{appendix}

\end{document}